\newcommand{\N}{\mathbb N}
\newtheorem{theorem}{Theorem}[section]
\newtheorem{corollary}[theorem]{Corollary}
\newtheorem{proposition}[theorem]{Proposition}
\theoremstyle{definition}
\newtheorem{definition}[theorem]{Definition}
\newtheorem{example}[theorem]{Example}
\theoremstyle{remark}
\newtheorem{remark}[theorem]{Remark}
\numberwithin{equation}{section}
\title{Simple strategies for Banach-Mazur games\\ and fairly correct
  systems \thanks{This work has been partly supported by a grant from
    the National Bank of Belgium, the ARC project (number
    AUWB-2010-10/15-UMONS-3), and the FRFC project (number
    2.4545.11).}}
\author{Thomas Brihaye \institute{UMONS\\ Mons, Belgium}
  \institute{University of Mons\\ Place du Parc 20, 7000 Mons,
    Belgium} \email{thomas.brihaye@umons.ac.be} \and Quentin Menet
\thanks{The second author is supported by a grant of FRIA.}
 \institute{UMONS\\ Mons, Belgium}
 \institute{University of Mons\\
  Place du Parc 20, 7000 Mons, Belgium} 
 \email{quentin.menet@umons.ac.be}
 }
\begin{document}
\maketitle

\begin{abstract}
In 2006, Varacca and V{\"o}lzer proved that on finite graphs,
$\omega$-regular large sets coincide with $\omega$-regular sets of
probability~1, by using the existence of positional strategies in the
related Banach-Mazur games. Motivated by this result, we try to
understand relations between sets of probability~1 and various
notions of simple strategies (including those introduced in a recent
paper of Gr{\"a}del and Le{\ss}enich).
Then, we introduce a generalisation of the
classical Banach-Mazur game and in particular, a probabilistic version
whose goal is to characterise sets of probability~1 (as classical
Banach-Mazur games characterise large sets). We obtain a determinacy
result for these games, when the winning set is a countable
intersection of open sets.
\end{abstract}

\section{Introduction}
Systems (automatically) controlled by computer programs abound in our
everyday life.
Clearly enough, it is of a capital importance to know whether the
programs governing these systems are \emph{correct}. Over the last
thirty years, formal methods for verifying computerised systems have
been developed for validating the adequation of the systems against
their requirements. Model checking is one such approach: it consists
first in modelling the system under study (for instance by an
automaton), and then in applying algorithms for comparing the
behaviours of that model against a specification (modelled for instance
by a logical formula). Model checking has now reached maturity,
through the development of efficient symbolic techniques,
state-of-the-art tool support, and numerous successful applications to
various areas.

As argued in~\cite{VV06}: \emph{`Sometimes, a model of a concurrent or
  reactive system does not satisfy a desired linear-time temporal
  specification but the runs violating the specification seem to be
  artificial and rare'}.  As a naive example of this phenomenon,
consider a coin flipped an infinite number of times. Classical
verification will assure that the property stating \emph{``one day, we
  will observe at least one head''} is false, since there exists a unique
execution of the system violating the property. In some situations,
for instance when modeling non-critical systems, one could prefer to
know whether the system is \emph{fairly correct}. Roughly speaking, a
system is fairly correct against a property if the set of executions
of the system violating the property is \emph{``very small''}; or
equivalently if the set of executions of the system satisfying the
property is \emph{``very big''}. A first natural notion of fairly
correct system is related to probability: \emph{almost-sure
  correctness}. A system is almost-surely correct against a property
if the set of executions of the system satisfying the property has
probability~$1$. Another interesting notion of fairly correct system
is related to topology: \emph{large correctness}. A system is largely
correct against a property if the set of executions of the system
satisfying the property is \emph{large} (in the topological
sense). There exists a lovely characterisation of \emph{large sets} by
means of the \emph{Banach-Mazur games}. In~\cite{oxtoby57}, it has
been shown that a set $W$ is large if and only if a player has a
winning strategy in the related Banach-Mazur game.

Although, the two notions of \emph{fairly correct systems} do not
coincide in general, in~\cite{VV06}, the authors proved (amongst
other results) the following result: when considering $\omega$-regular
properties on finite systems, the \emph{almost-sure correctness} and
the \emph{large correctness} coincide, for bounded Borel measures. Motivated by this very nice result, we intend to
extend it to a larger class of specifications.  The key ingredient to
prove the previously mentioned result of~\cite{VV06} is that when
considering $\omega$-regular properties, \emph{positional} strategies
are sufficient in order to win the related Banach-Mazur
game~\cite{BGK03}. For this reason, we investigate \emph{simple
  strategies} in Banach-Mazur games, inspired by the recent
work~\cite{GL12} where infinite graphs are studied.


\textbf{Our contributions.}  In this paper, we first compare various
notions of simple strategies on finite graphs (including
\emph{bounded} and \emph{move-counting} strategies), and their
relations with the sets of probability~$1$. Given a set $W$, the
existence of a bounded (resp. move-counting) winning strategy in the
related Banach-Mazur game implies that $W$ is a set of
probability~$1$. However there exist sets $W$ of probability~$1$ for
which there is no bounded and no move-counting winning strategy in
the related Banach-Mazur game.
Therefore, we introduce a generalisation of the classical Banach-Mazur
game and in particular, a probabilistic version whose goal is to
characterise sets of probability~1 (as classical Banach-Mazur games
characterise large sets). We obtain the desired characterisation in
the case of countable intersections of open sets. This is the main
contribution of the paper. As a byproduct of the latter, we get a
determinacy result for our probabilistic version of the Banach-Mazur game
for countable intersections of open sets.






\section{Banach-Mazur Games on finite graphs}\label{finite}

Let $(X,\mathcal{T})$ be a topological space. A notion of topological
``bigness'' is given by large sets.  A subset $W\subset X$ is said to
be \emph{nowhere dense} if the closure of $W$ has empty interior.  A
subset $W\subset X$ is said to be \emph{meagre} if it can be expressed
as the union of countably many nowhere dense sets and a subset
$W\subset X$ is said to be large if $W^c$ is meagre. In particular, we
remark that a countable intersection of large sets is still large and
that if $W\subset X$ is large, then any set $Y\supset W$ is large.

If $G=(V,E)$ is a finite directed graph and $v_0\in V$, then the space of infinite
paths in $G$ from $v_0$, denoted $\text{Paths}(G,v_0)$, can be endowed
with the complete metric
%
\begin{equation}\label{metricgraph}
{d((\sigma_n)_{n\geq 0},(\rho_n)_{n\geq 0})=2^{-k}} \quad\text{where}\quad k=\min\{n\ge
0: \sigma_n\ne \rho_n\}
\end{equation} 
with the conventions that $\min \emptyset=\infty$ and
$2^{-\infty}=0$. 
In other words, the open sets in $\text{Paths}(G,v_0)$
endowed with this metric
are the countable unions of cylinders, where
a cylinder is a set of the form
${\{\rho \in
\text{Paths}(G,v_0) ~|~ \pi \text{ is a prefix of } \rho\}}$
for some finite path $\pi$ in $G$ from $v_0$.

We can therefore study the large subsets of the metric
space $(\text{Paths}(G,v_0),d)$.  Banach-Mazur games allow us to
characterise large subsets of this metric space through the existence
of winning strategies.

\begin{definition}
A \emph{Banach-Mazur game} $\mathcal{G}$ on a finite graph is a
triplet $(G,v_0,W)$ where $G=(V,E)$ is a finite directed graph where
every vertex has a successor, $v_0\in V$ is the initial state, $W$ is
a subset of the infinite paths in $G$ starting in~$v_0$.
\end{definition}

A Banach-Mazur game $\mathcal{G}=(G,v_0,W)$ on a finite graph is a
two-player game where Pl.~0 and Pl.~1 alternate in choosing a finite
path as follows: Pl. 1 begins with choosing a finite\footnote{In this
  paper, we always assume that a finite path is non-empty.} path
$\pi_1$ starting in $v_0$; Pl.~0 then prolongs $\pi_1$ by choosing
another finite path $\pi_2$ and so on. A play of $\mathcal{G}$ is thus
an infinite path in $G$ and we say that Pl. 0 wins if this path
belongs to $W$, while Pl.~1 wins if this path does not belong to
$W$. The set $W$ is called the winning condition.  It is important to
remark that, in general, in the literature, Pl.~0 moves first in
Banach-Mazur games but in this paper, we always assume that Pl.~1
moves first in order to bring out the notion of large set (rather than
meagre set).  The main result about Banach-Mazur games can then be
stated as follows:

\begin{theorem}[\cite{oxtoby57}]\label{thm BMfinite}
Let $\mathcal{G}=(G,v_0,W)$ be a Banach-Mazur game on a finite graph.
Pl. 0 has a winning strategy for $\mathcal{G}$ if and only if $W$ is
large.
\end{theorem}

\section{Simple strategies in Banach-Mazur games}

In a Banach-Mazur game $(G,v_0,W)$ on a finite graph, a strategy for
Pl. 0 is given by a function $f$ defined on $\text{FinPaths}(G,v_0)$,
the set of finite paths of $G$ starting from $v_0$, such that for any
$\pi\in \text{FinPaths}(G,v_0)$, we have $f(\pi)\in
\text{FinPaths}(G,\text{last}(\pi))$. However, we can imagine some
restrictions on the strategies of Pl.~0:
\begin{enumerate}
\item A strategy $f$ is said to be \emph{positional} if it only
  depends on the current vertex, i.e $f$ is a function defined on $V$
  such that for any $v\in V$, $f(v)\in \text{FinPaths}(G,v)$
  and a play $\rho$ is consistent with $f$ if 
  $\rho$ is of the form $(\pi_i f(\text{last}(\pi_i))_{i\ge 1}$.
\item A strategy $f$ is said to be \emph{finite-memory} if it only
  depends on the current vertex and a finite memory
  (see~\cite{Gr08} for the precise definition of a finite-memory
  strategy).
\item A strategy $f$ is said to be \emph{b-bounded} if for any $\pi\in
  \text{FinPaths}(G,v_0)$, $f(\pi)$ has length less than $b$ and a
  strategy is said to be \emph{bounded} if there is $b\ge 1$ such that
  $f$ is b-bounded.
\item A strategy $f$ is said to be \emph{move-counting} if it only
  depends on the current vertex and the number of moves already played,
  i.e. $f$ is a function defined on $V\times \mathbb{N}$ such that for
  any $v\in V$, any $n\in \mathbb{N}$, $f(v,n)\in
  \text{FinPaths}(G,v)$ and a play $\rho$ is consistent with $f$ if $\rho$ 
  is of the form $(\pi_i f(\text{last}(\pi_i),i))_{i\ge 1}$.
\item A strategy $f$ is said to be \emph{length-counting} if it only
  depends on the current vertex and the length of the prefix already
  played, i.e. $f$ is a function defined on $V\times \mathbb{N}$ such
  that for any $v\in V$, any $n\in \mathbb{N}$, $f(v,n)\in
  \text{FinPaths}(G,v)$ and a play $\rho$ is consistent with $f$ if after
  a prefix $\pi$, the move of Pl.~0 is given by $f(\text{last}(\pi),|\pi|)$.
\end{enumerate}

The notions of positional and finite memory strategies are classical,
bounded strategies are present in~\cite{VV06}, move-counting and
length-counting strategies have been introduced in~\cite{GL12}.  We
first remark that, by definition, the existence of a positional
winning strategy implies the existence of finite-memory/move-counting/length-counting
 winning strategies. Moreover, since $G$
is a finite graph, a positional strategy is always
bounded. In~\cite{Gr08}, it is proved that the existence of a
finite-memory winning strategy implies the existence of a positional
winning strategy.

\begin{proposition}[\cite{Gr08}]
Let $\mathcal{G}=(G,v_0,W)$ be a Banach-Mazur game. Pl.~0 has a
finite-memory winning strategy if and only if Pl.~0 has a positional
winning strategy.
 \end{proposition}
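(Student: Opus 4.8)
The forward direction is immediate, since a positional strategy is a special case of a finite-memory strategy, so the whole content lies in turning a finite-memory winning strategy $f$, with finite memory set $M$, initial memory $m_0$ and deterministic update, into a positional one. First I would form the product graph $\hat G = G \times M$, whose edges $(v,m)\to(v',m')$ record both an edge $v\to v'$ of $G$ and the induced memory update. On $\hat G$ the strategy $f$ becomes genuinely positional: let $\hat g(v,m)$ be the lift of the finite path that $f$ prescribes at vertex $v$ with memory $m$. Since the memory is a deterministic function of the path, the projection $\text{Paths}(\hat G,(v_0,m_0)) \to \text{Paths}(G,v_0)$ is a bijection sending cylinders to cylinders, and lifting $W$ to $\hat W := \{\hat\rho : \pi(\hat\rho)\in W\}$ one checks that $\hat g$ is a positional winning strategy for $(\hat G,(v_0,m_0),\hat W)$.

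Note that, via Theorem~\ref{thm BMfinite} and this homeomorphism, one gets $\hat W$ large $\Leftrightarrow W$ large; but largeness alone yields only \emph{some} winning strategy on $G$ by Theorem~\ref{thm BMfinite}, not a positional one, so I cannot finish this way. The heart of the argument is therefore to collapse the memory, i.e.\ to convert the positional strategy $\hat g$ on $\hat G$ into a positional strategy $g$ on $G$. Here I would exploit a feature specific to Banach-Mazur games: a play is winning for Pl.~0 as soon as it contains \emph{infinitely many} blocks that are exact $\hat g$-responses, since all other segments---including Pl.~1's genuine moves---can be absorbed into Pl.~1's blocks of the decomposition witnessing consistency with $\hat g$. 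Hence it suffices to design, for each vertex $v$, a single finite path $g(v)$ that is guaranteed to contain at least one correct $\hat g$-response regardless of the (positionally invisible) current memory.

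Let me give the construction, which I flag as the \textbf{main obstacle}. At the start of Pl.~0's turn the true memory is an unknown $m\in M$, but it always lies in a known ``possibility set'' $S\subseteq M$, initially $S=M$. I would build $g(v)$ recursively: at the current vertex $u$ with possibility set $S$, pick any $m'\in S$ and play the projected response $p_{u,m'}=\pi(\hat g(u,m'))$. If the true memory was $m'$ this block is a correct $\hat g$-response; otherwise the true memory lay in $S\setminus\{m'\}$, so after this block it lies in the image $S'$ of $S\setminus\{m'\}$ under the memory update, a set of size at most $|S|-1$ still containing the true memory. Recursing on $S'$, the possibility set strictly shrinks while always containing the true memory, so after at most $|M|$ blocks it is a singleton, at which point the guessed $m'$ necessarily equals the true memory and the corresponding block is correct. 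Thus $g(v)$ is a finite path, depending only on $v$ (and on $\hat g,M$), that forces a correct $\hat g$-block.

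It then remains to assemble the pieces: every play consistent with the positional strategy $g$ contains one correct block per Pl.~0 turn, hence infinitely many, so its lift to $\hat G$ is consistent with $\hat g$ and therefore lies in $\hat W$, whence the play lies in $W$; this makes $g$ a positional winning strategy on $G$. The delicate point, and the one I expect to require the most care, is precisely this recursion: verifying that the possibility set genuinely decreases at each step, that the true memory is preserved inside it, and that the recursively concatenated responses still form a legal finite path in $G$ of bounded length.
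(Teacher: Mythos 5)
Your argument is correct, and it is essentially the standard proof from~\cite{Gr08} that the paper cites rather than reproduces: the paper's only visible use of this technique is in its proof of Proposition~\ref{equivlength}, which employs exactly your key idea of enumerating the finitely many possible hidden states (there: histories of a given length; here: memory states), chaining the corresponding responses into one block so that at least one sub-block is a genuine response, and absorbing the remaining sub-blocks into Pl.~1's moves. Your possibility-set recursion and the product-graph detour are sound (if slightly more machinery than needed), so there is nothing to object to.
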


Using the ideas of the proof of the above proposition, we can also
show that the existence of a winning strategy implies the existence of
a length-counting winning strategy.

\begin{proposition}\label{equivlength}
Let $\mathcal{G}=(G,v_0,W)$ be a Banach-Mazur game on a finite graph. 
Pl.~0 has a length-counting winning strategy if and only if Pl.~0 has a winning strategy.
\end{proposition}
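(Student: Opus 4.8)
The backward direction is immediate: a length-counting strategy is in particular a strategy, so if Pl.~0 has a length-counting winning strategy she has a winning strategy. For the forward direction the plan is to first invoke Theorem~\ref{thm BMfinite}: since Pl.~0 has a winning strategy, $W$ is large, hence $W^c$ is meagre. Writing $W^c\subseteq\bigcup_{k\ge 1}F_k$ with each $F_k$ closed and nowhere dense and the $F_k$ increasing, the sets $U_k:=F_k^c$ are open, dense, and nested decreasingly, and satisfy $\bigcap_{k\ge1}U_k\subseteq W$. It therefore suffices to build a length-counting strategy that forces the play into $\bigcap_k U_k$.

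The key observation is that in a finite graph, for each vertex $v$ and each length $n$ there are only finitely many finite paths from $v_0$ to $v$ of length exactly $n$; call this finite set $P(v,n)=\{\pi_1,\dots,\pi_r\}$. Fix any $\kappa:\mathbb{N}\to\mathbb{N}$ with $\kappa(n)\to\infty$. I would define the length-counting move $f(v,n)$ to be a single finite path $e^*$ from $v$ such that $[\pi_i\,e^*]\subseteq U_{\kappa(n)}$ for every $i$, where $[\cdot]$ denotes the cylinder. Such a common suffix exists by iterating density: choose $e^{(1)}$ with $[\pi_1e^{(1)}]\subseteq U_{\kappa(n)}$, then a path $e^{(2)}$ from $\mathrm{last}(e^{(1)})$ with $[\pi_2e^{(1)}e^{(2)}]\subseteq U_{\kappa(n)}$, and so on; appending further edges only shrinks the cylinders already captured, so after $r$ steps $e^*:=e^{(1)}\cdots e^{(r)}$ works uniformly for all of $P(v,n)$. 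By construction $f$ depends only on $(\mathrm{last}(\pi),|\pi|)$, so it is length-counting.

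To see that $f$ is winning, note that Pl.~0 moves infinitely often, at prefixes whose lengths $n_1<n_2<\cdots$ tend to infinity. Whatever the true history $\pi$ at such a turn, $\pi\in P(\mathrm{last}(\pi),n_j)$, so after Pl.~0's move the current cylinder is contained in $U_{\kappa(n_j)}$, and since the $U_k$ are nested it is contained in $U_{k'}$ for every $k'\le\kappa(n_j)$. The limit path extends all these cylinders, hence lies in $U_{\kappa(n_j)}$ for every $j$; as $\kappa(n_j)\to\infty$ it lies in $\bigcap_k U_k\subseteq W$, so Pl.~0 wins.

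The delicate point, and the step I expect to be the crux, is the uniform-forcing construction of the second paragraph, together with the recognition that it is exactly the finiteness of $P(v,n)$ that makes it available. This is precisely where length-counting is stronger than move-counting: for a fixed \emph{move} count there may be infinitely many compatible histories of unbounded length, so no single finite suffix can capture them all, whereas fixing the \emph{length} leaves only finitely many histories to defeat simultaneously. I would isolate the common-suffix construction as a small lemma and check that the chosen extensions can be taken non-empty, so that $f$ is a legal strategy.
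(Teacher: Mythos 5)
Your proof is correct, but it takes a genuinely different route from the paper's. The paper argues by direct strategy transformation: given a winning strategy $f$, it defines $h(v,n)$ as the concatenation $f(\pi_1)\,f(\pi_2 f(\pi_1))\cdots$ over an enumeration $\pi_1,\dots,\pi_m$ of the length-$n$ histories ending in $v$, each $\pi_i$ being fed to $f$ together with the previously accumulated responses; whatever the true history $\pi_i$ is, the $i$-th block of $h(v,n)$ is then exactly $f$'s answer to the actual prefix at that point, so every play consistent with $h$ embeds a play consistent with $f$ and hence lies in $W$. You instead route through Theorem~\ref{thm BMfinite}: largeness of $W$ gives a decreasing sequence of dense open sets $U_k$ with $\bigcap_k U_k\subseteq W$, and you build the length-counting strategy by iterated density to force the play into successive $U_k$'s. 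Both arguments hinge on the same finiteness of the set of histories of a fixed length ending at a fixed vertex (and you correctly identify this as the reason the argument is unavailable for move-counting strategies). The paper's version is self-contained and purely combinatorial --- it never invokes the topological characterisation, and the resulting $h$ explicitly simulates $f$, in the same spirit as the finite-memory-to-positional reduction it cites. Your version uses the Oxtoby theorem as a black box, which is a heavier prerequisite, but in exchange the strategy you construct depends only on the witnessing dense open sets rather than on a particular winning strategy, and it makes the Baire-category mechanism behind the proposition explicit. The uniform-forcing step you isolate (a common suffix $e^*$ working for all of $P(v,n)$, obtained by chaining extensions and using that cylinders only shrink) is sound, as is the concluding limit argument via $\kappa(n_j)\to\infty$ and the nesting of the $U_k$.
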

\begin{proof}
Let $f$ be a winning strategy for Pl.~0. Since $G$ is a
finite graph, for any $n\ge 0$ and any $v\in
V$, we can consider an enumeration $\pi_1,\dots,\pi_m$ of finite paths in
$\text{FinPaths}(G,v_0)$ of length $n$ such that
$\text{last}(\pi_i)=v$. We then let
\[h(v,n)=f\big(\pi_1\big)f\big(\pi_2f(\pi_1)\big)
f\big(\pi_3 f(\pi_1)f(\pi_2f(\pi_1))\big) \dots
f\big(\pi_mf(\pi_1)f(\pi_2f(\pi_1))\cdots\big).\] If $\rho$ is a play
consistent with $h$, then $\rho$ is a play where the strategy $f$ is
applied infinitely often. Thus such a play $\rho$ can be seen as a
play $\sigma_1 \tau_1 \sigma_2 \tau_2 \cdots$ where the $\tau_i$'s
(resp. the $\sigma_i$'s) are the moves of Pl.~0 (resp. Pl.~1.) and
where $f(\sigma_1 \tau_1 \cdots \sigma_i)=\tau_i$.  Each play
consistent with $h$ can thus be seen as a play consistent with $f$,
and we deduce that the strategy $h$ is a length-counting winning
strategy.
\end{proof}

On the other side, the notions of move-counting winning strategies and
bounded winning strategies are incomparable.

\begin{example}[\textbf{Set with a move-counting winning strategy and without a bounded winning strategy}]\label{nobound}
We consider the complete graph $G_{0,1}$ on $\{0,1\}$. Let $W$ be the
set of any sequences $(\sigma_n)_{n\ge 1}$ in $\{0,1\}^{\omega}$ with
$\sigma_1=0$ such that $(\sigma_n)_{n\ge 1}$ contains a finite sequence of $1$
strictly longer than the initial finite sequence of $0$. In other words,
$(\sigma_n)_{n\ge 1}\in W$ if $\sigma_1=0$ and if there exist $j\ge 1$ and $k\ge 1$ such that
$\sigma_{j}=1$ and $\sigma_{k+1}=\cdots=\sigma_{k+j}=1$. 
Let $\mathcal{G}=(G_{0,1},0,W)$. The strategy $f(\cdot,n)= 1^n$ is a move-counting winning
strategy for Pl.~0 for the game $\mathcal{G}$. On the other hand,
there does not exist a bounded winning
strategy for Pl.~0 for the game $\mathcal{G}$. Indeed, if $f$ is a $b$-bounded strategy of Pl.~0, then Pl.~1 can start
 by playing $0^b$ and then, always play $0$.
\end{example}

\begin{example}[\textbf{Set with a bounded winning strategy and without a move-counting winning strategy}]\label{nomove}
We consider the complete graph $G_{0,1}$ on $\{0,1\}$. Let
$(\pi_n)_{n\ge 0}$ be an enumeration of $\text{FinPaths}(G)$ with
$\pi_0=0$. We let $W$ be the set of any sequences in
$\{0,1\}^{\omega}$ starting by $0$ except the sequence
$\rho=\pi_0\pi_1\pi_2\dots$. Let $\mathcal{G}=(G_{0,1},0,W)$. It is
obvious that Pl.~0 has a $1$-bounded winning strategy for
$\mathcal{G}$ but we can also prove that Pl.~0 has no move-counting
winning strategy.  Indeed, if $h$ is a move-counting strategy of
Pl.~0, then Pl.~1 can start by playing a prefix $\pi$ of $\rho$ so
that $\pi h(\text{last}(\pi),1)$ is a prefix of $\rho$. Afterwards,
Pl.~1 can play $\pi'$ such that $\pi h(\text{last}(\pi),1)\pi'h(\text{last}(\pi'),2)$  is a prefix of $\rho$ and so on.
\end{example}
 
We remark that the sets $W$ considered in these examples are
\emph{open} sets, i.e. sets on a low level of the Borel
hierarchy. Moreover, by Proposition~\ref{equivlength}, there also
exist length-counting winning strategies for these two examples. The
relations between the simple strategies are thus completely
characterised and are summarised in Figure~\ref{resume2}. This
Figure also contains other simple strategies which will be discussed
later.

\section{Link with the sets of probability $1$}
Let $G=(V,E)$ be a finite directed graph. We can easily define a
probability measure $P$, on the set of infinite paths in $G$, by
giving a weight $w_e>0$ at each edge $e\in E$ and by considering that
for any $v, v' \in V$, $p_w(v,v')=0$ if $(v,v') \not\in E$ and
$p_w(v,v')=\frac{w_{(v,v')}}{\sum_{e' \text{\ enabled from v\ }}w_{e'}}$
else, where $p_w(v,v')$ denotes the probability of taking edge $(v,v')$
from state $v$. Given $v_1 \cdots v_n \in \text{FinPaths}(G,v_1)$, we
recall that we denote by $\text{Cyl}(v_1 \cdots v_n)$ the cylinder generated by $v_1
\cdots v_n$ and defined as $\text{Cyl}(v_1 \cdots v_n) = \{\rho \in
\text{Paths}(G,v_1) ~|~ v_1\cdots v_n \text{ is a prefix of } \rho\}$.
\begin{definition}
Let $G=(V,E)$ be a finite directed graph and $w=(w_e)_{e\in E}$ a
family of positive weights.  We define the probability measure $P_w$
by the relation
\begin{equation}
\label{eqP}
P_w(\text{{Cyl}}(v_1\cdots v_n)) =p_w(v_1,v_2) \cdot \dots \cdot p_w(v_{n-1},v_n)
\end{equation}
and we say that such a probability measure is \emph{reasonable}.
\end{definition}

We are interested in characterising the sets $W$ of probability~$1$
and their links with the different notions of simple winning
strategies. We remark that, in general, Banach-Mazur games do not
characterise sets of probability $1$. In other words, the notions of
large sets and sets of probability~$1$ do not coincide in general on finite graphs.
Indeed, there exist some large sets of probability~$0$. We
present here an example of such sets:

\begin{example}[\textbf{Large set of probability $0$}]\label{wwRdef}
We consider the complete graph $G_{0,1,2}$ on $\{0,1,2\}$ and the set
$W=\{(w_iw^R_i)_{i\geq 0}\in \text{Paths}(G_{0,1,2},2):w_i\in\{0,1,2\}^{*}\}$, where for any finite word
$\sigma\in\{0,1,2\}^{*}$ given by $\sigma=\sigma(1)\cdots\sigma(n)$
with $\sigma(i)\in\{0,1,2\}$, we let
$\sigma^R=\sigma(n)\cdots\sigma(1)$.
In other words, $W$ is the set of runs $\rho$ starting from $2$ that we can divide into a consecutive sequence of finite words and their reverse. It is obvious that Pl.~0 has a winning strategy for the Banach-Mazur game
$(G_{0,1,2},2,W)$ and thus that $W$ is large. On the other hand, if $P$ is the reasonable probability measure given by the
weights $w_e=1$ for any $e\in E$, then we can verify that
$P(W)=0$. 
Indeed, we have
\begin{align*}
P(W)&\le \sum_{n=1}^{\infty}P(\{w_0w^R_0(w_iw^R_i)_{i\geq 1}\in W: |w_0|=n\})\\
&= \sum_{n=1}^{\infty} P(\{w_0w^R_0w\in \text{Paths}(G_{0,1,2},2):|w_0|=n\})\cdot P(W)\\
&\le \sum_{n=1}^{\infty}\frac{P(W)}{3^n}= \frac{1}{2}P(W).
\end{align*}
\end{example}

For certain families of sets, we can however have an equivalence
between the notion of large set and the notion of set of probability
$1$. It is the case for the family of sets $W$ representing
$\omega$-regular properties on finite graphs (see \cite{VV06}). In
order to prove this equivalence for $\omega$-regular sets, Varacca and
V{\"o}lzer have in fact used the fact that for these sets, the
Banach-Mazur game is positionally determined (\cite{BGK03}) and that
the existence of a positional winning strategy for Pl.~0 implies
$P(W)=1$. This latter assertion follows from the fact that every
positional strategy is bounded and that, by the Borel-Cantelli lemma,
the set of plays consistent with a bounded strategy is a set of
probability $1$.  Nevertheless, if $W$ does not represent an
$\omega$-regular properties, it is possible that $W$ is a large set of
probability~$1$ and that there is no positional winning strategy for
Pl.~0 and even no bounded or move-counting winning strategy.
 
\begin{example}[\textbf{Large set of probability $1$ without a positional/ bounded/ move-counting winning strategy}]\label{ex:pos}
We consider the complete graph $G_{0,1}$ on $\{0,1\}$ and the
reasonable probability measure $P$ given by $w_e=1$ for any $e\in E$. Let
$a_n=\sum_{k=1}^nk$. We let $W=\{(\sigma_k)_{k\ge
  1}\in\{0,1\}^{\omega}: \sigma_1=0 \ \text{and}\ \sigma_{a_n}=1
\ \text{for some}\ n > 1\}$ and $\mathcal{G}=(G_{0,1},0,W)$. Since
Pl.~0 has a winning strategy for $\mathcal{G}$, we deduce that $W$ is
a large set. We can also compute that $P(W)=1$ because if we denote by
$A_n$, $n>1$, the set
\[A_n:=\{(\sigma_k)_{k\ge 1}\in\{0,1\}^{\omega}: \sigma_{a_n}=1 \ \text{and}\ \sigma_{a_m}=0 \ \text{for any}\ m<n\},\] we have:
\[W=\dot{\bigcup}_{n> 1}A_n \quad \text{and}
\quad P(A_n)=\frac{1}{2^{n-1}}.\] 
On the other hand, there does not 
exist any positional (resp. bounded) winning strategy $f$ for
Pl.~0. Indeed, if $f$ is a positional (resp. bounded) strategy for
Pl.~0 such that $f(0)$ (resp. $f(\pi)$ for any $\pi$) has length less
than $n$, then Pl.~1 has just to start by playing $a_n$ zeros so that
Pl.~1 does not reach the index $a_{n+1}$ and afterwards to complete
the sequence by a finite number of zeros to reach the next index
$a_k$, and so on. Moreover, there does not exist any move-counting
winning strategy $h$ for Pl.~0 because Pl.~1 can start by playing
$a_n$ zeros so that $|h(0,1)|\le n$ and because, at each step $k$,
Pl.~1 can complete the sequence by a finite number of zeros to reach a
new index $a_n$ such that $|h(0,k)|\le n$.
\end{example}

On the other hand, we can show that the existence of a move-counting
winning strategy for Pl.~0 implies $P(W)=1$. The key idea is to
realise that given a move-counting winning strategy $h$, the strategy
$h(\cdot,n)$ is positional.

\begin{proposition}\label{propmove}
Let $\mathcal{G}=(G,v_0,W)$ be a Banach-Mazur game on a finite graph
and $P$ a reasonable probability measure.  If Pl.~0 has a
move-counting winning strategy for $\mathcal{G}$, then $P(W)=1$.
\end{proposition}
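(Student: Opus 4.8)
The plan is to show that the set of plays consistent with a move-counting winning strategy $h$ has probability $1$, which suffices since $W$ contains this set. The key observation, as hinted in the excerpt, is that for each fixed move number $n$, the map $v \mapsto h(v,n)$ is a \emph{positional} strategy; moreover, a positional strategy is bounded (since $G$ is finite), and the set of plays consistent with a bounded strategy has probability $1$ by Borel--Cantelli. The challenge is that along a single play, the strategy $h$ uses a \emph{different} positional strategy at each move (the $n$-th move uses $h(\cdot,n)$), so I cannot directly invoke the single-bounded-strategy argument. Instead I must run a Borel--Cantelli-style argument that accounts for the varying, but uniformly bounded, behaviour across moves.

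\textbf{Setup.} Let $h$ be a move-counting winning strategy. Since $G$ is finite, for each $n$ the positional strategy $h(\cdot,n)$ has bounded length; let $b_n = \max_{v\in V} |h(v,n)|$ and set $b = \max_n b_n$, which is finite because there are finitely many vertices \emph{and} the relevant lengths are bounded over $V$ for each $n$ --- but note $b_n$ need not be bounded over all $n$. So I should \emph{not} assume a global bound; rather I will work move-by-move. First I would fix the minimal probability of any single edge, $p_{\min} = \min_{(v,v')\in E} p_w(v,v') > 0$, which exists and is strictly positive since $G$ is finite and $P$ is reasonable.

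\textbf{Main argument.} The plan is to describe plays NOT consistent with $h$ and show they form a null set. A play $\rho$ fails to be consistent with $h$ if at infinitely many ``checkpoints'' the continuation prescribed by $h(\cdot,n)$ is not followed. More precisely, I would define a decomposition of $\rho$ into successive blocks and let $C_n$ be the event that, at the $n$-th checkpoint with current vertex $v$, the next $|h(v,n)|$ edges of $\rho$ coincide with $h(v,n)$. Because $h(v,n)$ is a \emph{fixed} finite path from $v$, conditioned on reaching $v$ at that checkpoint, the probability that the play follows $h(v,n)$ is at least $p_{\min}^{\,b_n} > 0$. The subtlety is arranging the checkpoints so these events are handled by a single Borel--Cantelli computation; I would instead mimic the proof for bounded strategies directly: the set of plays consistent with $h$ is exactly those in which, from every position and every move index, Pl.~0's prescribed finite path is eventually realized as a factor, and one shows by a second--Borel--Cantelli argument (using independence of the increments and $\sum_n p_{\min}^{b_n}$ behaviour along a suitable subsequence of checkpoints, each giving a success with probability bounded below) that almost every play contains infinitely many such realizations and hence is consistent with $h$.

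\textbf{Main obstacle.} The hard part will be handling the fact that $b_n = \max_v |h(v,n)|$ may grow with $n$, so the per-checkpoint success probability $p_{\min}^{b_n}$ may tend to $0$; a naive second Borel--Cantelli requires $\sum_n p_{\min}^{b_n} = \infty$ along independent trials, which need not hold. The clean fix I would pursue is to reduce to the bounded case by \emph{not} checking consistency at every move: since $h$ is winning, it suffices that $h$ is applied infinitely often along $\rho$ (as in the proof of Proposition~\ref{equivlength}, a play where $f$ is applied infinitely often can be read as a play consistent with $f$). Thus I only need infinitely many successful checkpoints, not all of them. I would select checkpoints greedily --- after a successful application of $h(\cdot,n)$, wait until the play returns to a configuration where a single fixed positional strategy $h(\cdot,m)$ with a \emph{fixed small} $m$ (say the index minimizing $b_m$, giving a uniform positive success probability $q := p_{\min}^{b_m}>0$) can be tried --- and invoke the second Borel--Cantelli lemma on this sequence of independent trials, each succeeding with probability at least $q$. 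Since $\sum q = \infty$ and the trials are independent across disjoint blocks, almost surely infinitely many succeed, so the play applies $h$ infinitely often and therefore lies in $W$. This yields $P(W)=1$.
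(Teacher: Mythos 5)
Your opening observation --- that $h(\cdot,n)$ is positional for each fixed $n$ --- is exactly the right starting point, but the argument you build on it breaks at the final step. A play is consistent with the move-counting strategy $h$ only if it decomposes as $\pi_1 h(\text{last}(\pi_1),1)\,\pi_2 h(\text{last}(\pi_2),2)\cdots$, i.e.\ the move counter must advance at every application. Your ``clean fix'' arranges for infinitely many successful applications of a \emph{single fixed} positional strategy $h(\cdot,m)$; such a play is consistent with the positional strategy $h(\cdot,m)$, but not with $h$, and $h(\cdot,m)$ on its own is not assumed to be winning for $W$. So the conclusion ``the play applies $h$ infinitely often and therefore lies in $W$'' does not follow. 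If instead you try move index $n$ at the $n$-th checkpoint, you are back to the obstacle you yourself identified: the success probabilities $p_{\min}^{b_n}$ may decay too fast for a second Borel--Cantelli argument, and in any case consistency with $h$ requires the applications of $h(\cdot,1),h(\cdot,2),\dots$ to \emph{all} occur, in order, not merely infinitely many successes.

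The paper's proof sidesteps all of this with a countable-intersection argument that you almost have in hand. For each $n$ let $M_n$ be the set of plays consistent with the positional (hence bounded, with its own bound $b_n$) strategy $f_n=h(\cdot,n)$. The standard Borel--Cantelli fact for bounded strategies gives $P(M_n)=1$ for each $n$ \emph{separately} --- no uniformity in $n$ is needed, since each index is treated on its own --- and therefore $P(\bigcap_n M_n)=1$. Finally, any $\rho\in\bigcap_n M_n$ is consistent with $h$: take an application of $f_1$ in $\rho$, then (since $f_2$ is applied infinitely often in $\rho$) an application of $f_2$ starting strictly later, then an application of $f_3$ later still, and so on; this interleaving exhibits $\rho$ in the required form, so $\bigcap_n M_n\subseteq W$ and $P(W)=1$. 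The missing idea in your proposal is precisely this: rather than one global Borel--Cantelli pass along a single play, apply the bounded-strategy fact once per move index and intersect.
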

\begin{proof}
Let $h$ be a move-counting winning strategy of $Pl.~0$. We denote by
$f_n$ the strategy $h(\cdot,n)$.  Each set \[M_n:=\{\rho\in
\text{Paths}(G,v_0)~:~\rho \ \text{is a play consistent with\ }f_n\}\] has probability~$1$ since $f_n$ is a positional winning
strategy for the Banach-Mazur game $(G,v_0,M_n)$. Moreover, if $\rho$
is a play consistent with $f_n$ for each $n\ge 1$, then $\rho$ is
a play consistent with $h$. In other words, since $h$ is a winning
strategy, we get $\bigcap_n M_n\subset W$. Therefore, as $P(M_n)=1$ for all $n$,
we know that $P(\bigcap_n M_n)=1$ and we conclude that $P(W)=1$.
\end{proof}


Let us notice that the converse of Proposition~\ref{propmove} is false
in general.  Indeed, Example~\ref{ex:pos} exhibit a large set $W$ of
probability~$1$ such that Pl.~0 has no move-counting winning
strategy. However, if $W$ is a countable intersection of
$\omega$-regular sets, then the existence of a winning strategy for
Pl.~0 implies the existence of a move-counting winning strategy for
Pl.~0.


\begin{proposition}\label{countintmovecount}
Let $\mathcal{G}=(G,v_0,W)$ be a Banach-Mazur game on a finite graph
where $W$ is a countable intersection of $\omega$-regular sets
$W_n$. Pl.~0 has a winning strategy if and only if Pl.~0 has a move-counting
winning strategy.
\end{proposition}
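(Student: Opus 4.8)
The plan is to dispatch the easy implication first and then to build the move-counting strategy by interleaving positional strategies for the individual $W_n$. The implication from a move-counting winning strategy to a winning strategy is immediate, since a move-counting strategy is a particular strategy; so I would concentrate on the converse.

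So suppose Pl.~0 has a winning strategy for $\mathcal{G}=(G,v_0,W)$. By Theorem~\ref{thm BMfinite}, $W$ is large. Since $W=\bigcap_n W_n\subset W_n$ for every $n$ and any superset of a large set is large, each $W_n$ is large. Applying Theorem~\ref{thm BMfinite} to $(G,v_0,W_n)$ yields a winning strategy for Pl.~0 in that game, and since $W_n$ is $\omega$-regular, the positional determinacy of Banach-Mazur games for $\omega$-regular conditions (\cite{BGK03}) upgrades it to a \emph{positional} winning strategy $f_n$ for $(G,v_0,W_n)$.

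The heart of the argument is to merge the $f_n$ into a single move-counting strategy that calls each of them infinitely often. I would fix a schedule $\sigma:\N_{\ge 1}\to\N_{\ge 1}$ attaining every value infinitely often (e.g. the diagonal enumeration $1,1,2,1,2,3,\dots$) and set $h(v,i)=f_{\sigma(i)}(v)$, which is move-counting by construction. Given a play $\rho=\pi_1 h(\text{last}(\pi_1),1)\,\pi_2 h(\text{last}(\pi_2),2)\cdots$ consistent with $h$, fix $n$: at each of the infinitely many rounds $i$ with $\sigma(i)=n$, Pl.~0 plays $f_n$ at the current vertex. Exactly as in the proof of Proposition~\ref{equivlength}, I would regroup $\rho$ by absorbing into one Pl.~1 move everything occurring between two consecutive applications of $f_n$; in this regrouping $\rho$ is a play consistent with the positional winning strategy $f_n$, hence $\rho\in W_n$. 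Since this holds for every $n$, we obtain $\rho\in\bigcap_n W_n=W$, so $h$ is a move-counting winning strategy.

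I expect the main obstacle to be justifying that applying a positional winning strategy at infinitely many rounds already forces the play into $W_n$. This is the ``absorbing the opponent's intermediate moves'' feature peculiar to Banach-Mazur games that already underlies Propositions~\ref{equivlength} and~\ref{propmove}: Pl.~1's moves between two successive applications of $f_n$ concatenate into a single legal Pl.~1 move, so any play using $f_n$ infinitely often is consistent with $f_n$ and therefore winning for $W_n$. Once this observation is secured, the remaining content is the purely combinatorial fact that $\sigma$ indeed triggers each $f_n$ at infinitely many rounds, which the diagonal enumeration guarantees.
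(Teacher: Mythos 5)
Your proof is correct and follows essentially the same route as the paper: obtain positional winning strategies $f_n$ for each $(G,v_0,W_n)$ via \cite{BGK03}, interleave them with a schedule hitting every index infinitely often, and conclude that any play consistent with the resulting move-counting strategy is consistent with each $f_n$. The only cosmetic difference is that the paper notes directly that the given winning strategy for $W$ is already winning for each $W_n\supseteq W$, rather than passing through largeness and Theorem~\ref{thm BMfinite}.
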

\begin{proof}
Let $W=\bigcap_{n\ge 1} W_n$ where $W_n$ is an $\omega$-regular set
and $f$ a winning strategy of Pl.~0 for $\mathcal{G}$.  For any $n\ge 1$, the strategy
$f$ is a winning strategy for the Banach-Mazur game $(G,v_0,W_n)$.
Thanks to~\cite{BGK03}, we therefore know that for any $n\ge 1$, there
exists a positional winning strategy $\tilde{f}_n$ of Pl.~0 for
$(G,v_0,W_n)$.

Let $\phi:\mathbb{N}\to \mathbb{N}$ such that for any $k\ge 1$,
$\{n\in \mathbb{N}:\phi(n)=k\}$ is an infinite\footnote{Such a map $\phi$ exists because one could build a surjection $\psi:\N\to\N\times\N$ and then let $\phi=\psi_1$ where $\psi(n)=(\psi_1(n),\psi_2(n))$.} set.  We consider the
move-counting strategy $h(v,n)=\tilde{f}_{\phi(n)}(v)$. This strategy
is winning because each play $\rho$ consistent with $h$ is a
play consistent with $\tilde{f}_n$ for any $n$ and thus
\begin{align*}
&\{\rho\in \text{Paths}(G,v_0)~:~\rho \ \text{is a play consistent with\ }h\}\\
&\quad\subseteq\bigcap_n \{\rho\in \text{Paths}(G,v_0)~:~\rho \ \text{is a play consistent with\ }\tilde{f}_n\}\\
&\quad\subseteq \bigcap_n W_n=W.
\end{align*}
\end{proof}


\begin{remark}
We cannot extend this result to countable unions of $\omega$-regular
sets because the set of countable unions of $\omega$-regular sets
contains the open sets and Example~\ref{nomove} exhibited a Banach-Mazur
game where $W$ is an open set and Pl.~0 has a winning strategy but no
move-counting winning strategy.
\end{remark}
\begin{remark}\label{cor-omega}
We also notice that if $W$ is a countable intersection of
$\omega$-regular sets, then $W$ is large if and only if $W$ is a set
of probability $1$.  Indeed, the notions of large sets and sets of
probability~$1$ are stable by countable intersection and we know that
a $\omega$-regular set is large if and only if it is of
probability~$1$ \cite{VV06}.
\end{remark}

As a consequence of Remark~\ref{cor-omega}, we have that if $W$ is a
$\omega S$-regular sets, as defined in~\cite{BC06}, the set $W$ is
large if and only if $W$ is a set of probability $1$. Indeed, it is
shown in~\cite{HS12, HST10} that $\omega S$-regular sets are countable
intersection of $\omega$-regular sets. Nevertheless, the following
example shows that, unlike the case of $\omega$-regular sets,
positional strategies are not sufficient for $\omega S$-regular sets.

\begin{example}[\textbf{$\omega S$-regular set with a move-counting winning strategy and without a positional/ bounded  winning strategy}]\label{omegaS}
We consider the complete graph $G_{0,1}$ on $\{0,1\}$ and the set $W$
corresponding to the $\omega S$-regular expression
$((0^*1)^*0^S1)^{\omega}$, which corresponds to the language of words
where the number of consecutive $0$ is unbounded. The move-counting
strategy which consists in playing $n$ consecutive $0$'s at the $n$th
step is winning for Pl.~0. However, clearly enough Pl.~0 does not have
a positional (nor bounded) winning strategy for $W$.
\end{example}

Example~\ref{wwRdef} shows that Remark~\ref{cor-omega} does not extend
to $\omega$-context-free sets. Another notion of simple strategies,
natural inspired by Example~\ref{wwRdef}, is the notion of last-move
strategy. A strategy $f$ for Pl.~0 is said to be \emph{last-move} if
it only depends on the last move of Pl.~1, i.e. for any $v\in V$, for
any $\pi\in\text{FinPaths}(G,v)$, $f(\pi)\in
\text{FinPaths}(G,\text{last}(\pi))$ and a play $\rho$ is consistent
with $f$ if it is of the form $(\pi_i f(\pi_i))_{i \ge 1}$. It is
obvious that there exists a last-move winning strategy for Pl.~0 in
the game described in Example~\ref{wwRdef}. In particular, we deduce
that the existence of a last-move winning strategy for $W$ does not
imply that $W$ has probability $1$. Example~\ref{wwRdef} allows also
us to see that the existence of a last-move winning strategy does not
imply in general the existence of a move-counting winning strategy or
a bounded winning strategy. Indeed, let $W$ be the set
$\{(w_iw^R_i)_i\in \text{Paths}(G_{0,1,2},2):w_i\in\{0,1,2\}^{*}\}$.
Since $P(W)=0$ (and thus $P(W) \ne 1$), we know that Pl.~0 has no
move-counting winning strategy by Proposition~\ref{propmove} and no
bounded winning strategy.


The notion of last-move winning strategy is in fact incomparable with
the notion of move-counting winning strategy and the notion of bounded
winning strategy. Indeed, on the complete graph $G_{0,1}$ on
$\{0,1\}$, if we denote by $W$ the set of runs in $G_{0,1}$ such that
for any $n\ge 1$, the word $1^n$ appears, then Pl.~0 has a
move-counting winning strategy for the game $(G_{0,1},0,W)$ but no
last-move winning strategy. In the same way, if we denote by $W$ the
set of aperiodic runs on $G_{0,1}$ then Pl.~0 has a $1$-bounded
winning strategy for the game $(G_{0,1},0,W)$ but no last-move winning
strategy (it suffices for Pl.~1 to play at each time the same word). 

\section{Generalised Banach-Mazur games}

Let $\mathcal{G}=(G,v_0,W)$ be a Banach-Mazur game on a finite graph.
We know that the existence of a bounded winning strategy or a move-counting
winning strategy of Pl.~0 for $\mathcal{G}$ implies that $P(W)=1$ for every
reasonable probability measure $P$. Nevertheless, it is possible that $P(W)=1$
and Pl.~0 has no bounded winning strategy and no move-counting
winning strategy (Example~\ref{ex:pos}). We therefore search a
new notion of strategy such that the existence of such a winning
strategy implies $P(W)=1$ and the existence of a bounded winning strategy or a move-counting
winning strategy imply the existence of such a winning strategy.
To this end, we introduce a new type of Banach-Mazur games:

\begin{definition}
A \emph{generalised Banach-Mazur game} $\mathcal{G}$ on a finite graph
is a tuple $(G,v_0,\phi_0,\phi_1,W)$ where $G=(V,E)$ is a finite
directed graph where every vertex has a successor, $v_0\in V$ is the
initial state, $W\subset \text{{Paths}}(G,v_0)$, and $\phi_i$ is a map
on $\text{{FinPaths}}(G,v_0)$ such that for any $\pi\in
\text{{FinPaths}}(G,v_0)$, \[\phi_i(\pi)\subset
\mathcal{P}\big(\text{{FinPaths}}(G,\text{{last}}(\pi))\big)\setminus
\{\emptyset\} \ \text{\ and\ } \ \phi_i(\pi)\ne \emptyset.\]
\end{definition}

A generalised\footnote{We only present here a generalisation of Banach-Mazur games on finite graphs but
this generalisation could be extended to Banach-Mazur games on topological spaces by asking
that for any non-empty open set $O$, $\phi_i(O)$ is a collection of non-empty open subsets of $O$.
} Banach-Mazur game $\mathcal{G}=(G,v_0,\phi_0,\phi_1,W)$
on a finite graph is a two-player game where Pl.~0 and Pl.~1 alternate
in choosing \emph{sets of finite paths} as follows: Pl. 1 begins with
choosing a set of finite paths $\Pi_1\in \phi_1(v_0)$; Pl.~0 selects a
finite path $\pi_1\in \Pi_1$ and chooses a set of finite paths
$\Pi_2\in \phi_0(\pi_1)$; Pl 1. then selects $\pi_2\in \Pi_2$ and
proposes a set $\Pi_3\in\phi_1(\pi_1\pi_2)$ and so on. A play of
$\mathcal{G}$ is thus an infinite path $\pi_1\pi_2\pi_3\dots$ in $G$
and we say that Pl. 0 wins if this path belongs to $W$, while Pl.~1
wins if this path does not belong to $W$.

We remark that if we let
$\phi_{\text{ball}}(\pi):=\{\{\pi'\}:\pi'\in\text{FinPaths}(G,\text{last}(\pi))\}$
for any $\pi\in \text{FinPaths}(G,v_0)$, then the generalised
Banach-Mazur game given by
$(G,v_0,\phi_{\text{ball}},\phi_{\text{ball}},W)$ coincides with the
classical Banach-Mazur game $(G,v_0,W)$. We also obtain a game similar
to the classical Banach-Mazur game if we consider the function
$\phi(\pi)=\mathcal{P}(\text{FinPaths}(G,\text{last}(\pi)))$. On the
other hand, if we consider
$\phi(\pi):=\{\{\pi'\}:\pi'\in\text{FinPaths}(G,\text{last}(\pi)),
|\pi'|=1\}$, we obtain the classical games on graphs such as the ones
studied in~\cite{lncs2500}.

We are interested in defining a map $\phi_0$ such that Pl.~0 has a winning strategy for $(G,v_0,\phi_0,\phi_{\text{ball}},W)$ if and only if $P(W)=1$. To this end, we notice that we can restrict actions of Pl.~0 by forcing each set in $\phi_0(\pi)$ to be ``big'' in some sense.
The idea to characterise $P(W)=1$ is therefore to force Pl. 0 to play with finite sets of finite paths of conditional probability bigger than $\alpha$ for some $\alpha >0$.

\begin{definition}
Let $\mathcal{G}=(G,v_0,W)$ be a Banach-Mazur game on a finite graph,
$P$ a reasonable probability measure and $\alpha>0$. An
\emph{$\alpha$-strategy} of Pl.~0 for $\mathcal{G}$ is a strategy of
Pl. 0 for the generalised Banach-Mazur game
$\mathcal{G}_{\alpha}=(G,v_0,\phi_{\alpha},\phi_{\text{ball}},W)$
where \[\phi_{\alpha}(\pi)=\Big\{\Pi\subset\text{{FinPaths}}(G,\text{{last}}(\pi)):
P\Big(\bigcup_{\pi'\in
  \Pi}\text{{Cyl}}(\pi\pi')\Big|\text{{Cyl}}(\pi)\Big)\ge\alpha
\ \text{and}\ \Pi\ \text{is finite}\Big\}.\]
We recall that, given two events $A,B$ with $P(B)>0$, the conditional probability $P(A|B)$ is defined by $P(A|B):=P(A\cap B)/P(B)$.
\end{definition}

We notice that every bounded strategy can be seen as an
$\alpha$-strategy for some $\alpha>0$, since for any $N\ge 1$, there
exists $\alpha>0$ such that for any $\pi$ of length less than $N$, we
have $P(\{\pi\})\ge \alpha$. We can also show that the existence of a 
move-counting winning strategy for Pl.~0 implies the existence of a
winning $\alpha$-strategy for Pl.~0 for every $0<\alpha<1$.

\begin{proposition}\label{movalpha}
Let $\mathcal{G}=(G,v_0,W)$ be a Banach-Mazur game on a finite graph.
If Pl.~0 has a move-counting winning strategy, then Pl.~0 has a winning 
$\alpha$-strategy for every $0<\alpha<1$.
\end{proposition}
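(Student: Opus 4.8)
The plan is to assemble a winning $\alpha$-strategy for Pl.~0 from the positional ``slices'' $f_n:=h(\cdot,n)$ of the given move-counting winning strategy $h$. Let $M_n$ denote the set of plays consistent with $f_n$. The proof of Proposition~\ref{propmove} supplies the two facts I will use: each $M_n$ has probability~$1$ (as $f_n$ is positional, hence bounded), and $\bigcap_{n\ge 1}M_n\subseteq W$ (a play lying in every $M_n$ is consistent with $h$, which is winning). Hence it suffices to design a $\phi_\alpha$-strategy that drives the play into $\bigcap_n M_n$.

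The obstacle is that $M_n$ is not open --- it encodes ``$f_n$ is applied infinitely often'' --- so Pl.~0 cannot commit a play to $M_n$ in a single move. I would therefore stratify it by open sets. For $k\ge 1$, let $O_k^{(n)}$ be the set of plays some prefix of which decomposes as $\sigma_1 f_n(\text{last}(\sigma_1))\cdots\sigma_k f_n(\text{last}(\sigma_k))$, i.e.\ which already display $k$ applications of $f_n$. Each $O_k^{(n)}$ is open and \emph{prefix-monotone}: as soon as a prefix $\xi$ witnesses it, $\text{Cyl}(\xi)\subseteq O_k^{(n)}$, so all extensions remain inside. The crux is the equality $\bigcap_k O_k^{(n)}=M_n$. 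The inclusion $M_n\subseteq O_k^{(n)}$ is immediate; for the reverse I would parse any $\rho\in\bigcap_k O_k^{(n)}$ \emph{greedily}, ending each successive $f_n$-block as early as possible. An induction shows the greedy $i$-th block finishes no later than the $i$-th block of \emph{any} valid decomposition (a competitor's $i$-th application point is always an admissible greedy choice), so membership in $O_{i+1}^{(n)}$ always furnishes a further admissible application point and the greedy parse never halts; hence $\rho\in M_n$. I expect this greedy argument to be the most delicate point, as it replaces a K\"onig/compactness argument that is unavailable here: Pl.~1's blocks are unbounded, so the tree of decompositions is infinitely branching.

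With the stratification in place, the $\alpha$-strategy is pure bookkeeping. Fix a bijection $j\mapsto(n_j,k_j)$ from $\N$ onto $\N\times\N$. At Pl.~0's $j$-th turn, with current prefix $\pi$ (necessarily of positive probability, so the relevant conditional probabilities are defined), put $U:=O_{k_j}^{(n_j)}$ and propose a finite set $\Pi\in\phi_\alpha(\pi)$ all of whose cylinders are contained in $U$. This is possible because $U\supseteq M_{n_j}$, so $P(U\mid\text{Cyl}(\pi))=1$; writing the open set $U\cap\text{Cyl}(\pi)$ as a disjoint union of cylinders $\text{Cyl}(\pi\pi'_1),\text{Cyl}(\pi\pi'_2),\dots$ whose conditional probabilities sum to $1$, a finite subfamily already has conditional probability $\ge\alpha$, and $\Pi=\{\pi'_1,\dots,\pi'_m\}$ is then a legal $\phi_\alpha$-move with every $\text{Cyl}(\pi\pi'_i)\subseteq U$.

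It remains to see that Pl.~0 wins. Whatever $\pi'\in\Pi$ Pl.~1 selects, the updated prefix satisfies $\text{Cyl}(\pi\pi')\subseteq O_{k_j}^{(n_j)}$, so by prefix-monotonicity the play is \emph{permanently} committed to $O_{k_j}^{(n_j)}$; the intervening free $\phi_{\text{ball}}$-moves of Pl.~1 only lengthen prefixes and therefore preserve every commitment made so far. As the bijection visits each pair $(n,k)$ exactly once, the infinite play produced lies in $O_k^{(n)}$ for all $n$ and $k$, hence in $\bigcap_{n,k}O_k^{(n)}=\bigcap_n M_n\subseteq W$. Thus Pl.~0's strategy is winning, and since $0<\alpha<1$ was arbitrary we obtain a winning $\alpha$-strategy for every such $\alpha$.
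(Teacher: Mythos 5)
Your proof is correct, but it takes a genuinely different route from the paper's. The paper works with the \emph{concatenated} positional strategies $g_n(v)=h(v,1)\,h(\cdot,2)\cdots h(\cdot,n)$ and lets $f(\pi_0)$ be a finite, conditionally $\alpha$-heavy set of paths each of which \emph{ends with a full application of} $g_{|\pi_0|}$; since $|\pi_0|$ strictly increases along a play, every play consistent with $f$ is syntactically of the form \eqref{eqform} with increasing indices and is therefore consistent with $h$, so no topological analysis of the consistency sets is ever needed. You instead keep the individual slices $f_n=h(\cdot,n)$, prove that each consistency set $M_n$ is a countable intersection of prefix-monotone open sets $O^{(n)}_k$ (your greedy-parsing lemma, which I checked: the induction that the greedy $i$-th endpoint never exceeds that of any competitor decomposition does go through, the next greedy block being nonempty because it absorbs at least the competitor's $(i+1)$-st free segment), and then run what is essentially the $1.\Rightarrow 3.$ construction of Theorem~\ref{pl0} on the countable family $\{O^{(n)}_k\}$, each of probability $1$ because it contains $M_n$. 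What your route buys is modularity: the only inputs are the two facts already isolated in Proposition~\ref{propmove} plus the observation that ``consistent with a positional strategy'' is a $G_\delta$ condition, after which the conclusion is an instance of the Theorem~\ref{pl0} machinery applied to the probability-one $G_\delta$ set $\bigcap_{n,k}O^{(n)}_k\subseteq W$. What it costs is the parsing lemma, which is the one delicate step and which the paper's choice of $g_n$ is engineered to avoid. Two details to tidy up: when decomposing $U\cap\text{Cyl}(\pi)$ into disjoint cylinders, make sure the chosen $\pi'_i$ are nonempty (if $\text{Cyl}(\pi)\subseteq U$ already, take the one-step extensions); and your strategy consults the turn number $j$, which is legitimate for a general strategy of the generalised game but worth stating, whereas the paper sidesteps it by indexing with the length $|\pi_0|$.
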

\begin{proof}
Let $P$ be a reasonable probability measure, $h$ a move-counting
winning strategy for Pl.~0 and $0<\alpha<1$.  We denote by $g_n$ the
positional strategy defined by
\[g_n(v)=h(v,1)\ h\big(\text{last}(h(v,1)),2\big)\ 
\cdots\ h\big(\text{last}(h(v,1)\ h(\text{last}(h(v,1)),2)\cdots),n\big).\]
Let us notice that the definition of the $g_n$'s implies that for any increasing sequence $(n_k)$, a play
of the form
\begin{equation}
 \pi_1 \ g_{n_1}(\text{last}(\pi_1)) \ \pi_2 \ g_{n_2}(\text{last}(\pi_2))
 \ \cdots \ \pi_k \ g_{n_k}(\text{last}(\pi_k)) \ \cdots
 \label{eqform}
 \end{equation}
  is consistent
 with $h$. Since $g_n$ is a positional strategy, we know that each set
\[M_n:=\{\rho\in
\text{Paths}(G,v_0)~:~\rho \ \text{is a play consistent with
  \ }g_n\}\] has probability~$1$.
  In particular, for any $\pi_0\in \text{FinPaths}(G,v_0)$, we deduce that $P(M_n|\text{Cyl}(\pi_0))=1$. Since
  \[M_n\cap \text{Cyl}(\pi_0)\subseteq  \bigcup_{\pi\in\text{FinPaths}(G,\text{last}(\pi_0))}\text{Cyl}\big(\pi_0\pi g_n(\text{last}(\pi))\big),\]
  we have
  \[P \Big(\bigcup_{\pi\in\text{FinPaths}(G,\text{last}(\pi_0))}\text{Cyl}\big(\pi_0\pi g_n(\text{last}(\pi))\big)\Big|\text{Cyl}(\pi_0)\Big)=1\]
  and since $\text{FinPaths}(G,\text{last}(\pi_0))$ is countable, we deduce that for any $n\ge 1$, any $\pi_0\in \text{FinPaths}(G,v_0)$, there exists a finite subset $\Pi_n(\pi_0)\subset \text{FinPaths}(G,\text{last}(\pi_0))$ such that 
  \[P\Big(\bigcup_{\pi\in \Pi_n(\pi_0)}\text{Cyl}\big(\pi_0\pi g_n(\text{last}(\pi))\big)\Big|\text{Cyl}(\pi_0)\Big)\ge \alpha.\]
   We denote by $\Pi'_n(\pi_0)$ the set $\{\pi g_n(\text{last}(\pi)):\pi\in \Pi_n(\pi_0)\}$ and we let \[f(\pi_0):=\Pi'_{|\pi_0|}(\pi_0).\]
 The above-defined strategy $f$ is therefore a winning
 $\alpha$-strategy for Pl.~0 since each play consistent with $f$ is of the form \eqref{eqform} for some sequence $(n_k)$ and thus consistent with $h$.
\end{proof}

Moreover, the existence of a winning
$\alpha$-strategy for some $\alpha>0$ still implies $P(W)=1$.

\begin{theorem}\label{alphaWun}
Let $\mathcal{G}=(G,v_0,W)$ be a Banach-Mazur game on a finite graph
and $P$ a reasonable probability measure. If Pl. 0 has a winning
$\alpha$-strategy for some $\alpha>0$, then $P(W)=1$.
\end{theorem}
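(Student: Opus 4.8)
The plan is to prove the stronger statement that the set $S$ of plays consistent with the winning $\alpha$-strategy $f$ already satisfies $P(S)=1$; since $f$ is winning we have $S\subseteq W$, so this gives $P(W)=1$. This follows the same pattern as the Borel--Cantelli argument for bounded strategies (quoted in Section~\ref{finite}), with the uniform capture bound $\alpha$ now playing the role of the fixed positive probability of a bounded pattern. To exhibit a decomposition for almost every play, I would describe a \emph{greedy} way of realising $f$ along a fixed play $\rho$: maintaining a current position $s$ (initially $0$) together with the decomposition built so far, I let Pl.~1's free move grow one edge at a time, and at each position $m=s+1,s+2,\dots$ I test whether $\rho$ continues into the \emph{finite} union $\bigcup_{\pi'\in f(\rho|_m)}\text{Cyl}(\rho|_m\,\pi')$, where $\rho|_m$ is the length-$m$ prefix and $f(\rho|_m)$ is evaluated on the greedy decomposition reached at $m$. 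At the first position $m^\ast$ where such a \emph{match} occurs, I record $\rho[s{:}m^\ast]$ as the free move and the matching element of $f(\rho|_{m^\ast})$ as Pl.~0's move, and restart a new phase. If this process never gets stuck it produces an infinite decomposition that is a legal play of $\mathcal{G}_\alpha$ consistent with $f$ (each block is non-empty, so positions tend to infinity and all of $\rho$ is covered), whence $\rho\in S$.

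It therefore suffices to show that, for $P$-almost every $\rho$, the greedy process never gets stuck. Getting stuck means that some phase, started at a position $s$, never produces a match, i.e. $\rho\in\bigcap_{m>s}F_m$ where $F_m$ denotes the failure of the test at $m$. Two facts are key. First, by the very definition of $\phi_\alpha$, at every trial we have $P(F_m\mid\mathcal{F}_m)\le 1-\alpha$, where $\mathcal{F}_m$ is the $\sigma$-algebra generated by the length-$m$ prefixes: conditionally on $\text{Cyl}(\rho|_m)$ the proposed set $f(\rho|_m)$ captures probability at least $\alpha$. Second, because each $f(\rho|_m)$ is a \emph{finite} set of finite paths, $F_m$ is determined by a bounded look-ahead, namely $F_m\in\mathcal{F}_{m+L_m}$ with $L_m=\max_{\pi'\in f(\rho|_m)}|\pi'|<\infty$. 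The finiteness built into $\phi_\alpha$ is exactly what makes this look-ahead finite, and this is where that hypothesis enters.

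Using these two facts I would run a conditional (stopping-time) Borel--Cantelli estimate in place of an independence argument. Fix $s$ and define stopping times $T_0=s$ and $T_{j+1}=T_j+L_{T_j}$, so that $F_{T_0},\dots,F_{T_j}\in\mathcal{F}_{T_{j+1}}$ while $P(F_{T_{j+1}}\mid\mathcal{F}_{T_{j+1}})\le 1-\alpha$. The tower property then yields $P\big(\bigcap_{j=0}^{J}F_{T_j}\big)\le(1-\alpha)^{J+1}$, and letting $J\to\infty$ gives $P\big(\bigcap_{m>s}F_m\big)=0$ for each fixed $s$. Since the possible phase-start positions lie in $\N$, a countable union shows that the event that greedy gets stuck is $P$-null. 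Hence $P(S)=1$ and, as $S\subseteq W$, we conclude $P(W)=1$.

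The main obstacle is the dependence structure: the test events $F_m$ overlap in the future (a match may be arbitrarily long) and $f(\rho|_m)$ depends on the decomposition reached so far, so one cannot simply invoke independence as in the bounded case. The remedy is precisely the adaptive spacing $T_{j+1}=T_j+L_{T_j}$ combined with the conditional bound $P(F_m\mid\mathcal{F}_m)\le 1-\alpha$; the delicate points to verify carefully in the write-up are that the greedy history (and hence $f(\rho|_m)$) is $\mathcal{F}_m$-measurable at each trial, and that the finiteness of the sets in $\phi_\alpha$ keeps every look-ahead $L_m$ finite so that the stopping times are well defined.
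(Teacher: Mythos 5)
Your proof is correct, and it rests on the same core mechanism as the paper's: exhibit infinitely many pairwise disjoint ``windows'' along the play in each of which the $\alpha$-capture event built into $\phi_\alpha$ is tested, note that each test fails with conditional probability at most $1-\alpha$ given the past, conclude by a Borel--Cantelli-type product bound that almost every play passes infinitely many tests, and assemble those successes into a play of $\mathcal{G}_\alpha$ consistent with $f$ (hence in $W$, since $f$ is winning). The genuine difference is in how the windows are chosen. The paper fixes a deterministic checkpoint sequence $(a_n)$ in advance: because the graph is finite there are only finitely many prefixes of length $a_n$, so the lengths of the paths proposed by $f$ on them are uniformly bounded, and after a padding normalisation every proposed path has length exactly $a_{n+1}-a_n$; the failure events at distinct checkpoints then live on disjoint coordinate blocks and the bound $P(B_{m,n})\le(1-\alpha)^{n+1-m}$ needs no filtration machinery. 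You instead run an adaptive greedy decomposition and space the tests with the stopping times $T_{j+1}=T_j+L_{T_j}$, which obliges you to verify the measurability points you flag ($L_m$ and the greedy history being $\mathcal{F}_m$-measurable, the $T_j$ being stopping times), but uses only that each individual set $f(\pi)$ is finite rather than a uniform bound per prefix length; your argument would therefore survive on finitely branching infinite graphs where the paper's choice of $(a_n)$ may fail. A small bonus of your version: the greedy free moves have length at least one, so Pl.~1's interleaved moves are automatically non-empty, whereas in the paper's decomposition two consecutive successful checkpoints would yield an empty Pl.~1 move and one must tacitly pass to a sparser subsequence of successes.
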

\begin{proof}
Let $f$ be a winning $\alpha$-strategy. We consider an increasing
sequence $(a_n)_{n\ge 1}$ such that for any $n\ge 1$, any $\pi$ of
length $a_n$, each $\pi'\in f(\pi)$ has length less than
$a_{n+1}-a_n$; this is possible because for any $\pi$, $f(\pi)$ is a
finite set by definition of $\alpha$-strategy. Without loss of
generality\footnote{Let $\pi$ be a finite path and $n_\pi \ge \max
  \{|\tau| \text{ such that } \tau \in f(\pi)\}$. One can define
  $\tilde{f}(\pi)$ as the set of finite paths $\sigma$ of length
  $n_\pi$ such that $\tau$ is a prefix of $\sigma$, for some $\tau \in
  f(\pi)$. Given a play $\rho$, one can show that $\rho$ is consistent
  with $f$ if and only if $\rho$ is consistent with $\tilde{f}$.}, we
can even assume that for any $n\ge 1$, any $\pi$ of length $a_n$,
each $\pi'\in f(\pi)$ has exactly length $a_{n+1}-a_n$. We therefore
let \[A:=\{(\sigma_k)_{k\ge 1}\in \text{Paths}(G,v_0): \#\{n:
(\sigma_k)_{a_{n}+1\le k\le a_{n+1}}\in f((\sigma_k)_{1\le k\le
  a_n})\}=\infty\}.\] In other words, $(\sigma_k)_{k\ge 1}\in A$ if
$(\sigma_k)$ can be seen as a play where $f$ has been played on an infinite
number of indices $a_n$. Since $f$ is a winning strategy, $A$ is
included in $W$ and it thus suffices to prove that $P(A)=1$.

We first notice that for any $m\ge 1$, any $n\ge m$, if we let
\[B_{m,n}=\{(\sigma_k)_{k\ge 1}\in \text{Paths}(G,v_0): (\sigma_k)_{a_{j}+1\le k\le a_{j+1}}\notin f((\sigma_k)_{1\le k\le a_j}), \  \forall m\le j\le n\},\]
then $P(B_{m,n})\le (1-\alpha)^{n+1-m}$ as $f$ is an $\alpha$-strategy.
We therefore deduce that for any $m\ge 1$,
\[P\Big(\bigcap_{n=m}^{\infty}B_{m,n}\Big)=0\]
and since $A^c=\bigcup_{m\ge 1}\bigcap_{n=m}^{\infty}B_{m,n}$, we conclude that $P(A)=1$.
\end{proof}

If $W$ is a countable intersection of open sets, we can prove the
converse of Theorem~\ref{alphaWun} and so obtain a characterisation of
sets of probability $1$.

\begin{theorem}\label{pl0}
Let $\mathcal{G}=(G,v_0,W)$ be a Banach-Mazur game on a finite graph
where $W$ is a countable intersection of open sets and $P$ a
reasonable probability measure. Then the following assertions are
equivalent:
\begin{enumerate}
\item $P(W)=1$,
\item Pl. 0 has a winning $\alpha$-strategy for some $\alpha>0$,
\item Pl. 0 has a winning $\alpha$-strategy for all $0<\alpha<1$.
\end{enumerate}
\end{theorem}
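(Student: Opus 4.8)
The plan is to prove the equivalence through the cycle $(3)\Rightarrow(2)\Rightarrow(1)\Rightarrow(3)$. The implication $(3)\Rightarrow(2)$ is immediate, since a winning $\alpha$-strategy for every $\alpha\in(0,1)$ in particular yields one for some $\alpha>0$. The implication $(2)\Rightarrow(1)$ is exactly Theorem~\ref{alphaWun}. Thus the whole content lies in $(1)\Rightarrow(3)$: assuming $P(W)=1$, I must build, for each fixed $0<\alpha<1$, a winning $\alpha$-strategy for Pl.~0.

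First I would record two preliminary observations. Write $W=\bigcap_{n\ge 1}U_n$ with each $U_n$ open; replacing $U_n$ by $\bigcap_{k\le n}U_k$ I may assume $U_1\supseteq U_2\supseteq\cdots$. Since $W\subseteq U_n$ and $P(W)=1$, we have $P(U_n)=1$, and because a reasonable measure assigns positive mass to every cylinder generated by a genuine finite path, $P(U_n\mid\text{Cyl}(\pi))=1$ for every such $\pi$. The second, crucial, observation is that open sets are \emph{absorbing}: as $U_n$ is a countable union of cylinders, $U_n\cap\text{Cyl}(\pi)$ equals the union of those basic cylinders $\text{Cyl}(\pi\tau)$ that are contained in $U_n$, and once a play enters such a cylinder it stays inside $U_n$ forever.

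Then I would describe the strategy. At Pl.~0's $k$-th move, with current history $\pi$, she targets $U_k$: she lists the countably many extensions $\tau$ with $\text{Cyl}(\pi\tau)\subseteq U_k$, the union of whose cylinders has conditional probability $1$ given $\text{Cyl}(\pi)$ by the first observation; by continuity of measure from below she may then select a \emph{finite} subfamily $\Pi$ with $P\big(\bigcup_{\tau\in\Pi}\text{Cyl}(\pi\tau)\mid\text{Cyl}(\pi)\big)\ge\alpha$. This $\Pi$ is a legal move in $\phi_\alpha(\pi)$. Whatever element Pl.~1 subsequently selects, the play is permanently inside $U_k$ by the absorbing property, hence inside $U_1,\dots,U_k$ as the $U_n$ decrease.

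Finally I would argue that this strategy is winning: since Pl.~0 moves infinitely often and secures $U_k$ at her $k$-th move, any resulting infinite play lies in $\bigcap_{k}U_k=W$. As $\alpha\in(0,1)$ was arbitrary, assertion $(3)$ follows. The only delicate points are the generalised-game bookkeeping—it is Pl.~1 who selects from the finite set that Pl.~0 proposes, so the conclusion must hold for every choice—and the finite-subcover step, which is where both the openness of the $U_n$ and the hypothesis $P(W)=1$ are genuinely used. I expect this interplay between the finite-subcover argument and the absorption property to be the main point, the remaining bookkeeping being routine.
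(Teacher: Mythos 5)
Your proposal is correct and follows essentially the same route as the paper: the only content is $(1)\Rightarrow(3)$, and both arguments exploit that an open set of probability $1$ is a countable union of cylinders whose conditional measure given any $\text{Cyl}(\pi)$ is $1$, so that a finite subfamily of extensions, each with $\text{Cyl}(\pi\tau)\subseteq U_k$ and total conditional mass at least $\alpha$, is a legal move securing $U_k$ forever. The paper's strategy picks the target index as the least $n$ with $\text{Cyl}(\pi)\not\subset W_n$ rather than your move-counter on nested $U_k$'s, but this is an immaterial reformulation.
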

\begin{proof}
We have already proved $2.\Rightarrow 1.$, and $3.\Rightarrow 2.$ is
obvious.\\ $1.\Rightarrow 3.$ Let $0<\alpha<1$. Let
$W=\bigcap_{n=1}^{\infty}W_n$ where $W_n$'s are open sets. Since
$P(W)=1$, we deduce that for any $n\ge 1$, $P(W_n)=1$. We can
therefore define a winning $\alpha$-strategy $f$ of Pl.~0 as follows:
if $\text{Cyl}(\pi)\subset \bigcap_{k=1}^{n-1} W_k$ and
$\text{Cyl}(\pi)\not\subset W_n$, we let $f(\pi)$ be a finite set
$\Pi\subset\text{FinPaths}(G,\text{last}(\pi))$ such that
$P\Big(\bigcup_{\pi'\in
  \Pi}\text{Cyl}(\pi\pi')|\text{Cyl}(\pi)\Big)\ge\alpha$ and for any
$\pi'\in \Pi$, $\text{Cyl}(\pi\pi')\subset W_n$. Such a finite set
$\Pi$ exists because $W_n$ has probability
$1$ and $W_n$ is an open set, i.e. a countable union of cylinders. This
concludes the proof.
\end{proof}
\begin{remark}
We cannot hope to generalise the latter result to any set $W$. More
precisely, there exist sets of probability~$1$ for which no winning
$\alpha$-strategy exists. Indeed, given a set $W$, on the one hand,
the existence of a winning $\alpha$-strategy for $W$ implies the
existence of a winning strategy for $W$, and thus in particular such a
$W$ is large. On the other hand, we know that there exists some meagre
(in particular not large) set of probability~$1$ (see
Example~\ref{wwRdef}).  However, one can ask whether the existence
of a winning $\alpha$-strategy is equivalent to the fact that $W$ is a
large set of probability $1$.
\end{remark}
When $W$ is a countable intersection of open sets, 
we remark that the generalised Banach-Mazur game
$\mathcal{G}_{\alpha}=(G,v_0,\phi_{\alpha},\phi_{\text{ball}},W)$ is
in fact determined.

\begin{theorem}\label{thm-long}
Let $\mathcal{G}_{\alpha}$ be the generalised Banach-Mazur game 
given by $\mathcal{G}_{\alpha}=(G,v_0,\phi_{\alpha},\phi_{\text{ball}},W)$
where $G$ is a finite graph, $W$ is a countable intersection of open sets
 and $P$ a reasonable probability measure. Then the following assertions are
equivalent:
\begin{enumerate}
\item $P(W)<1$,
\item Pl. 1 has a winning strategy for $\mathcal{G}_{\alpha}$ for some
  $\alpha>0$,
\item Pl. 1 has a winning strategy for $\mathcal{G}_{\alpha}$ for all
  $0<\alpha<1$.
\end{enumerate}
\end{theorem}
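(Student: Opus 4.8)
The plan is to read this statement as the determinacy counterpart of Theorem~\ref{pl0}: that theorem already hands Pl.~0 a winning $\alpha$-strategy exactly when $P(W)=1$, so the only genuinely new content here is to produce an explicit winning strategy for Pl.~1 whenever $P(W)<1$. I would dispatch the two easy directions first. The implication $3.\Rightarrow 2.$ is immediate. For $2.\Rightarrow 1.$, note that a strategy of Pl.~0 and a strategy of Pl.~1 in $\mathcal{G}_\alpha$ determine together a single infinite path, so the two players cannot both win; hence if Pl.~1 has a winning strategy for $\mathcal{G}_\alpha$ then Pl.~0 has no winning $\alpha$-strategy, and since by Theorem~\ref{pl0} the condition $P(W)=1$ would give Pl.~0 a winning $\alpha$-strategy for every $\alpha$, we must have $P(W)<1$.

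The heart of the matter is $1.\Rightarrow 3.$. Fix $0<\alpha<1$ and write $W=\bigcap_n W_n$ with $W_n$ open. First I would replace $W_n$ by $\bigcap_{k\le n}W_k$ so that the sets decrease; then $W_n^c$ increases to $W^c$, and $P(W^c)>0$ yields an index $N$ with $P(C)>0$, where $C:=W_N^c$ is \emph{closed} and contained in $W^c$. Because $C$ is closed, an infinite path $\rho$ lies in $C$ as soon as infinitely many of its prefixes $\pi$ satisfy $\text{Cyl}(\pi)\cap C\ne\emptyset$: choosing a point of $C$ in each such cylinder produces a sequence in $C$ converging to $\rho$. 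Thus it suffices for Pl.~1 to steer the play so that cofinally many prefixes meet $C$, and then the resulting path lies in $C\subset W^c$, so Pl.~1 wins.

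To achieve this I would exploit the fact that $\phi_1=\phi_{\text{ball}}$, so on each turn Pl.~1 not only selects $\pi_{2i}$ from Pl.~0's menu but is also free to append an \emph{arbitrary} finite path $\pi_{2i+1}$. The invariant to maintain, at every position $\pi$ reached right after one of Pl.~1's free moves, is $P\bigl(C\mid\text{Cyl}(\pi)\bigr)>1-\alpha$. It is restored as follows. When Pl.~0 answers with a menu $\Pi\in\phi_\alpha(\pi)$, the union of the proposed cylinders has conditional probability at least $\alpha$, while $P\bigl(C^c\mid\text{Cyl}(\pi)\bigr)<\alpha$; hence the menu cannot lie entirely in $C^c$ and in fact carries positive $C$-mass, so Pl.~1 can select some $\pi'\in\Pi$ with $P\bigl(C\mid\text{Cyl}(\pi\pi')\bigr)>0$. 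Now L\'evy's $0$--$1$ law applied to the martingale $\rho\mapsto P\bigl(C\mid\text{Cyl}(\rho|_n)\bigr)$ shows that $P$-almost every point of the positive-measure set $\text{Cyl}(\pi\pi')\cap C$ is a density point of $C$; picking one such point and a long enough prefix of it gives a finite extension $\pi''$ with $\text{Cyl}(\pi\pi'\pi'')\cap C\ne\emptyset$ and $P\bigl(C\mid\text{Cyl}(\pi\pi'\pi'')\bigr)>1-\alpha$. Pl.~1 plays this $\pi''$ as his free move, restoring the invariant; as every such post-move prefix meets $C$, the play has cofinally many prefixes meeting $C$ and therefore lands in $C$.

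I expect the main obstacle to be exactly this density bookkeeping in the inductive step. Since $\rho\mapsto P\bigl(C\mid\text{Cyl}(\rho|_n)\bigr)$ is a martingale, Pl.~1 cannot in general keep the conditional measure of $C$ above $1-\alpha$ merely by choosing cleverly from Pl.~0's menu: the opponent can always shave the density down. What rescues the argument, and must be used carefully, is Pl.~1's \emph{free} move: after being dragged to a position of merely positive density, Pl.~1 re-centres on a density point of $C$ (via L\'evy's law) and thereby recovers density arbitrarily close to $1$, in particular above $1-\alpha$. The delicate points to verify are that this re-centring is always legal, the chosen extension being a genuine finite path of $G$ since it is a prefix of a point of $C\subset\text{Paths}(G,v_0)$, and that the threshold $1-\alpha$ is precisely the amount of density that survives one round of Pl.~0's play.
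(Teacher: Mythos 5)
Your proof is correct, but the implication $1.\Rightarrow 3.$ follows a genuinely different route from the paper's. The paper also first reduces to a single open set $W_N$ with $P(W_N)<1$, but then works directly with the open set rather than its closed complement: it introduces the infimum $I_W=\inf_\pi P(W\mid\text{Cyl}(\pi))$, shows by a finite averaging argument over cylinders of a fixed length that $I_W<P(W)$, picks a free move $\pi_1$ realising $I_W+\frac{1}{\alpha}\bigl(P(W\mid\text{Cyl}(\pi_1))-I_W\bigr)<P(W)$, and proves via a weighted-average contradiction that every $\alpha$-menu offered at such a position contains an element keeping $P(W\mid\text{Cyl}(\cdot))\le P(W)<1$; iterating and using that an open $W$ would force the conditional probability to become $1$ along any play lying in $W$ finishes the proof. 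You instead pass to the closed set $C=W_N^c$ of positive measure and have Pl.~1 maintain the invariant $P(C\mid\text{Cyl}(\pi))>1-\alpha$ after each free move: the $\alpha$-mass bound on Pl.~0's menu guarantees an element retaining positive $C$-density, and L\'evy's $0$--$1$ law lets Pl.~1 re-centre on a density point of $C$ to restore the invariant, so that cofinally many prefixes meet $C$ and closedness forces the play into $C\subseteq W^c$. Both arguments exploit the same two structural features (the free moves of Pl.~1 and the $\alpha$ lower bound on the menus), but yours buys conceptual transparency at the cost of invoking martingale convergence, whereas the paper's is entirely elementary (finite sums and one inequality chain) and never needs density points --- it only ever keeps the conditional probability of $W$ \emph{below} the fixed threshold $P(W)$ rather than pushing the conditional probability of the complement close to $1$. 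All the delicate points you flag (legality of the re-centring move, the role of the threshold $1-\alpha$, and the initial move establishing the invariant from $P(C)>0$) do go through as you describe.
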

\begin{proof}
We deduce from Theorem~\ref{pl0} that $2.\Rightarrow 1.$ because
$\mathcal{G}_{\alpha}$ is a zero-sum game, and $3.\Rightarrow 2.$ is
obvious.\\ $1.\Rightarrow 3.$ Let $W=\cap_{n=1}^{\infty}W_n$ with
$P(W)<1$ and $W_n$ open.  We know that there exists $n\ge 1$ such that
$P(W_n)<1$. It then suffices to prove that Pl.~1 has a winning
strategy for the generalised Banach-Mazur game
$(G,v_0,\phi_{\alpha},\phi_{\text{ball}},W_n)$ for all
$0<\alpha<1$. Without loss of generality, we can thus assume that $W$
is an open set. We recall that $W$ is open if and only if it is a
countable union of cylinders. Since any strategy of Pl.~1 is winning
if $W=\emptyset$, we also suppose that $W\ne \emptyset$.

Let $0<\alpha<1$. We first show that there exists a finite path $\pi_1\in \text{FinPaths}(G,v_0)$ such that any set $\Pi_2\in \phi_{\alpha}(\pi_1)$ contains a finite path $\pi_2$ satisfying 
\begin{equation}P(W|\text{Cyl}(\pi_1\pi_2))\le P(W)<1.\label{cond}
\end{equation}

Let \begin{equation}
I_W:=\inf\{P(W|\text{Cyl}(\pi)):\pi\in \text{FinPaths}(G,v_0)\}.
\label{Iw}
\end{equation} Since $W$ is a non-empty union of cylinders,
there exists $\sigma\in \text{FinPaths}(G,v_0)$ such that
${P(W|\text{Cyl}(\sigma))=1}$. We remark that $P(W)=\sum_{\pi:|\pi|=|\sigma|}P(W|\text{Cyl}(\pi))P(\text{Cyl}(\pi))$ and
$\sum_{\pi:|\pi|=|\sigma|}P(\text{Cyl}(\pi))=1$. Therefore, since $P(W|\text{Cyl}(\sigma))>P(W)$, we deduce that there exists
$\pi\in \text{FinPaths}(G,v_0)$ with $|\pi|=|\sigma|$ such that ${P(W|\text{Cyl}(\pi))<P(W)}$.
We conclude that $I_W<P(W)$ and thus, by definition of $I_W$, there
exists $\pi_1 \in \text{FinPaths}(G,v_0)$ such that
\begin{equation}
I_W+\frac{1}{\alpha}(P(W|\text{Cyl}(\pi_1))-I_W)<P(W).
\label{cond I}
\end{equation}
Let $\Pi_2\in \phi_{\alpha}(\pi_1)$. We consider $\tau_1,\dots,\tau_n\in \Pi_2$ and $\sigma_1,\dots,\sigma_m\in \text{FinPaths}(G,\text{last}(\pi_1))$ such that cylinders $\text{Cyl}(\tau_i)$, $\text{Cyl}(\sigma_j)$ are pairwise disjoint,
$\bigcup_{\pi\in \Pi_2}\text{Cyl}(\pi)\subset \bigcup_{i=1}^n \text{Cyl}(\tau_i)$ and 
\begin{equation}
\text{Paths}(G,\text{last}(\pi_1))=\bigcup_{i=1}^n \text{Cyl}(\tau_i)\cup \bigcup_{j=1}^m \text{Cyl}(\sigma_j).
\label{tausig}
\end{equation}
 Assume that for all $1 \le i\le n$, we have 
 \begin{equation}
 P(W|\text{Cyl}(\pi_1\tau_i))> P(W).
 \label{cont}
 \end{equation}
 Then, we get
\begin{align*}
&P(W|\text{Cyl}(\pi_1))\\
\quad&= \sum_{i=1}^n P(W\cap\text{Cyl}(\pi_1\tau_i)|\text{Cyl}(\pi_1)) + \sum_{j=1}^m P(W\cap\text{Cyl}(\pi_1\sigma_j)|\text{Cyl}(\pi_1))\ \ \text{by disjointness and \eqref{tausig}}\\
\quad&= \sum_{i=1}^nP(W|\text{Cyl}(\pi_1\tau_i))P(\text{Cyl}(\pi_1\tau_i)|\text{Cyl}(\pi_1)) + \sum_{j=1}^m P(W|\text{Cyl}(\pi_1\sigma_j))P(\text{Cyl}(\pi_1\sigma_j)|\text{Cyl}(\pi_1))\\
\quad&\ge P(W)\sum_{i=1}^nP(\text{Cyl}(\pi_1\tau_i)|\text{Cyl}(\pi_1))+I_W\sum_{j=1}^m P(\text{Cyl}(\pi_1\sigma_j)|\text{Cyl}(\pi_1))\ \ \text{by \eqref{cont} and \eqref{Iw}}\\
\quad&\ge P(W)\sum_{i=1}^nP(\text{Cyl}(\pi_1\tau_i)|\text{Cyl}(\pi_1))+I_W\big(1-\sum_{i=1}^nP(\text{Cyl}(\pi_1\tau_i)|\text{Cyl}(\pi_1))\big)\ \ \text{by \eqref{tausig}}\\
\quad&\ge P(W)P\big(\bigcup_{\pi\in \Pi_2}\text{Cyl}(\pi_1\pi)|\text{Cyl}(\pi_1)\big)+I_W\big(1-P\big(\bigcup_{\pi\in \Pi_2}\text{Cyl}(\pi_1\pi)|\text{Cyl}(\pi_1)\big)\big)\ \ \text{by properties of $\tau_i$'s}\\
\quad&\ge P(W)\alpha + I_W(1-\alpha)  \quad\text{(because $\Pi_2\in \phi_{\alpha}(\pi_1)$ and $P(W)>I_W$)}
\end{align*}
and thus
$P(W)\le I_W+\frac{1}{\alpha}(P(W|\text{Cyl}(\pi_1))-I_W)$
which is a contradiction with \eqref{cond I}. We conclude that if $\pi_1$ is given by $\eqref{cond I}$, then any set $\Pi_2\in \phi_{\alpha}(\pi_1)$ contains a finite path $\pi_2$ satisfying \eqref{cond}.

We can now exhibit a winning strategy for Pl.~1.  We assume that Pl.~1
begins with playing a finite path $\pi_1$ satisfying \eqref{cond I}. Let
$f$ be an $\alpha$-strategy. We know that Pl.~1 can select a finite
path $\pi_2\in f(\pi_1)$ satisfying \eqref{cond},
i.e. $P(W|\text{Cyl}(\pi_1\pi_2))\le P(W)$. By repeating the above
method from $\pi_1\pi_2$, we also deduce the existence of a finite
path $\pi_3$ such that any set $\Pi_4\in
\phi_{\alpha}(\pi_1\pi_2\pi_3)$ contains a finite path $\pi_4$
satisfying $P(W|\text{Cyl}(\pi_1\pi_2\pi_3\pi_4))\le P(W)$. We can
thus assume that Pl.~1 plays such a finite path $\pi_3$ and then
selects $\pi_4\in f(\pi_1\pi_2\pi_3)$ such that
$P(W|\text{Cyl}(\pi_1\pi_2\pi_3\pi_4))\le P(W)$. This strategy is a
winning strategy for Pl.~1. Indeed, as $W$ is an open set and thus a
countable union of cylinders, if
$P(W|\text{Cyl}(\pi_1\cdots\pi_{2n}))\le P(W)<1$ for any $n$, then
$\pi_1\pi_2\pi_3\cdots\notin W$.
\end{proof}
\begin{corollary}
Let $0<\alpha<1$. The generalised Banach-Mazur game $\mathcal{G}_{\alpha}=(G,v_0,\phi_{\alpha},\phi_{\text{ball}},W)$ is determined when $W$ is a countable intersection of open sets. More precisely, 
Pl.~0 has a winning strategy for $\mathcal{G}_{\alpha}$ if and only if $P(W)=1$, and
Pl.~1 has a winning strategy for $\mathcal{G}_{\alpha}$ if and only if $P(W)<1$.
\end{corollary}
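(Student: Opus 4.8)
The plan is to obtain this statement as an immediate combination of Theorem~\ref{pl0} and Theorem~\ref{thm-long}, after making one bookkeeping observation: by the very definition of an $\alpha$-strategy, ``Pl.~0 has a winning $\alpha$-strategy'' means exactly ``Pl.~0 has a winning strategy for $\mathcal{G}_{\alpha}$''. So the corollary is really a translation of the two preceding theorems into the language of winning strategies for the single game $\mathcal{G}_{\alpha}$, read off at the fixed parameter $\alpha$.

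Fix $0<\alpha<1$. First I would prove the equivalence for Pl.~0. If $P(W)=1$, then by the implication $1.\Rightarrow 3.$ of Theorem~\ref{pl0}, Pl.~0 has a winning $\alpha$-strategy for every value in $(0,1)$, and in particular for our fixed $\alpha$; hence Pl.~0 wins $\mathcal{G}_{\alpha}$. Conversely, if Pl.~0 wins $\mathcal{G}_{\alpha}$, then Pl.~0 has a winning $\alpha$-strategy for this particular $\alpha>0$, which is exactly assertion~$2.$ of Theorem~\ref{pl0}; the implication $2.\Rightarrow 1.$ then yields $P(W)=1$. This establishes that Pl.~0 wins $\mathcal{G}_{\alpha}$ if and only if $P(W)=1$.

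The equivalence for Pl.~1 is obtained in the same way from Theorem~\ref{thm-long}: the implication $1.\Rightarrow 3.$ gives that $P(W)<1$ forces a winning strategy for Pl.~1 for every $\alpha\in(0,1)$, hence for ours, while a winning strategy of Pl.~1 at our fixed $\alpha$ is precisely assertion~$2.$, so $2.\Rightarrow 1.$ returns $P(W)<1$. Finally, determinacy follows because the conditions $P(W)=1$ and $P(W)<1$ are mutually exclusive and together exhaust all cases: exactly one of them holds, so exactly one of the two players has a winning strategy for $\mathcal{G}_{\alpha}$.

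I do not expect any genuine obstacle, precisely because all the substantive work has already been carried out in Theorems~\ref{pl0} and~\ref{thm-long}. The only point that requires a little care is matching the quantifiers: the ``for all $0<\alpha<1$'' clauses (assertion~$3.$ in each theorem) are what produce a winning strategy at our fixed $\alpha$, whereas the ``for some $\alpha>0$'' clauses (assertion~$2.$) are what is invoked for the converse direction, starting from a winning strategy at that same $\alpha$.
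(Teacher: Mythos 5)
Your proposal is correct and matches the paper's (implicit) argument: the corollary is stated without proof precisely because it is the immediate combination of Theorems~\ref{pl0} and~\ref{thm-long} that you describe, with determinacy following from the dichotomy $P(W)=1$ or $P(W)<1$. Your care in matching the ``for some'' and ``for all'' quantifiers at the fixed $\alpha$ is exactly the right bookkeeping.
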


Since the existence of a bounded winning strategy for Pl.~0 implies
the existence of a winning $\alpha$-strategy for Pl.~0 and 
the existence of a move-counting winning strategy for Pl.~0 implies
the existence of a winning $\alpha$-strategy for Pl.~0,
we deduce from Example~\ref{nobound} and Example~\ref{nomove} that
in general, the existence of a winning $\alpha$-strategy for Pl.~0 does not
imply the existence of a move-counting winning strategy Pl.~0
and the existence of a bounded winning strategy for Pl.~0.
 On the other hand, we know that there exists a Banach-Mazur game for which Pl.~0 has a bounded
 winning strategy and no last-move winning strategy. The existence of a winning $\alpha$-strategy thus does
  not imply in general the existence of a last-move winning strategy.
Conversely, if we consider the game $(G_{0,1},0,W)$ described in
Example~\ref{wwRdef}, Pl.~0 has a last-move winning strategy but no
winning $\alpha$-strategy (as $P(W)=0$). The notion of $\alpha$-strategy is thus
incomparable with the notion of last-move strategy.


\section{More on simple strategies}

We finish this paper by considering the crossings between the
different notions of simple strategies and the notion of bounded
strategy i.e. the bounded length-counting strategies, the bounded
move-counting strategies and the bounded last-move
strategies. Obviously, the existence of a bounded length-counting winning strategy for Pl.~0
implies the existence of a length-counting winning strategy for Pl.~0, and
we have this implication for each notion of bounded strategies and their
no bounded counterpart. We start by noticing that the existence of a
bounded move-counting winning strategy is equivalent to the existence
of a positional winning strategy.

\begin{proposition}\label{moreon}
Let $\mathcal{G}=(G,v_0,W)$ be a Banach-Mazur game on a finite graph.
Pl.~0 has a bounded move-counting winning strategy if and only if
Pl.~0 has a positional winning strategy.
\end{proposition}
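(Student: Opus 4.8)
The plan is to prove the two implications separately; the reverse one is immediate, and the forward one is where boundedness does the real work.

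For the direction $\Leftarrow$, a positional strategy depends only on the current vertex, so it can be read as a move-counting strategy that simply ignores the move count, and since $G$ is finite it is automatically bounded. Hence a positional winning strategy is, in particular, a bounded move-counting winning strategy. The content is in the converse.

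For $\Rightarrow$, let $h$ be a $b$-bounded move-counting winning strategy and write $g_n := h(\cdot,n)$, a positional strategy for each $n$. The crucial point---and the only place where boundedness is used---is that the family $\{g_n : n\ge 1\}$ is \emph{finite}: for every $v$ there are only finitely many finite paths from $v$ of length less than $b$, and $V$ is finite, so each $g_n$ ranges over a fixed finite set of positional strategies. I would then enumerate $\{g_n : n\ge 1\}=\{s_1,\dots,s_p\}$ and define the \emph{composite} positional strategy $g^*(v):=s_1(v)\,s_2(\mathrm{last}(s_1(v)))\cdots s_p(\cdots)$, obtained by applying $s_1,\dots,s_p$ one after the other; it is again positional and bounded (by $pb$).

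It then remains to check that $g^*$ is winning, and I would do this through two observations. First, every play consistent with $g^*$ is consistent with each $s_i$: in a play $\mu_1\,g^*(w_1)\,\mu_2\,g^*(w_2)\cdots$ the factor $s_i(\cdot)$ occurring inside the $k$-th block $g^*(w_k)$ is a genuine application of $s_i$ to the current vertex, so by declaring these factors to be the moves of Pl.~0 and merging everything lying between two consecutive such factors into a single (non-empty) move of Pl.~1, one exhibits the play as consistent with $s_i$. Consequently every $g^*$-play is consistent with $g_n$ for all $n$. Second, by the key observation already established in the proof of Proposition~\ref{propmove}, a play consistent with $h(\cdot,n)=g_n$ for every $n$ is consistent with $h$, hence lies in $W$ since $h$ is winning. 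Combining the two, every play consistent with $g^*$ belongs to $W$, so $g^*$ is a positional winning strategy.

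The easy direction and the verification that $g^*$ is bounded and positional are routine; the step I expect to be the most delicate is the first observation, namely checking that the regrouping really yields a legal positional decomposition---one must ensure that the designated $s_i$-factor is applied to exactly the last vertex of the preceding merged Pl.~1 block, and that this block is non-empty (including the boundary cases $i=1$ and the very first block). The conceptual heart, however, is the finiteness of $\{g_n\}$ forced by boundedness: without it the composite $g^*$ could not be formed, which is precisely why an unbounded move-counting strategy (as in Example~\ref{nobound}) need not give rise to a positional one.
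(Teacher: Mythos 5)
Your proof is correct, but it takes a genuinely different route from the paper's. The paper works component by component: it takes the bottom strongly connected components $C_1,\dots,C_N$ of $G$, uses boundedness to extract the \emph{finite range} $\{w^{(i)}_1,\dots,w^{(i)}_{k_i}\}$ of $h$ on $C_i\times\mathbb{N}$, and defines the positional move at $v\in C_i$ by concatenating all these words with connecting paths $\sigma_l$ (which exist precisely because $C_i$ is strongly connected); vertices outside every BSCC are simply routed into one. You instead use boundedness to make the family of positional \emph{slices} $\{h(\cdot,n):n\ge 1\}=\{s_1,\dots,s_p\}$ finite and compose the strategies themselves, so that each $s_{i+1}$ is automatically applied at the endpoint of $s_i$ and no connectors -- hence no BSCC analysis -- are needed; you then conclude via the observation from the proof of Proposition~\ref{propmove} that a play consistent with every $h(\cdot,n)$ is consistent with $h$. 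Your version is arguably cleaner: it sidesteps the graph-theoretic decomposition entirely, and it makes explicit the regrouping argument (including non-emptiness of the merged Pl.~1 blocks, guaranteed by the $\mu_k$'s they contain) that the paper leaves implicit -- indeed the paper's own verification that each $f$-play is an $h$-play silently requires locating, inside each block, an occurrence of $w^{(i)}_j$ starting at a vertex $u$ with $h(u,n)=w^{(i)}_j$, a small fixed-point argument on starting vertices that your factorisation avoids. Your composite $g^*$ is also close in spirit to the maps $g_n$ the paper builds in the proof of Proposition~\ref{movalpha}, so the technique is consonant with the paper even though the published proof of Proposition~\ref{moreon} goes another way. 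What the paper's approach buys in exchange is a positional strategy whose moves stay inside a single BSCC, which is occasionally a useful extra property, but it is not needed for the statement.
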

\begin{proof}
Let $h$ be a bounded move-counting winning strategy for Pl.~0.
We denote by $C_1,\dots,C_N$ the bottom strongly connected components (BSCC) of $G$.
Let $1\le i\le N$.
Since $h$ is a bounded strategy and $G$ is finite, there exist some finite paths
$w^{(i)}_1,\dots,w^{(i)}_{k_i}\subset C_i$ such that for any $v\in C_i$, for any $n\ge 1$,
\[h(v,n)\in\{w^{(i)}_1,\dots,w^{(i)}_{k_i}\}.\]
Let $v\in V$. If $v\in C_i$, we let $f(v)=\sigma_0w^{(i)}_1\sigma_1w^{(i)}_2\sigma_2\dots w^{(i)}_{k_i}$ where $\sigma_l$ are finite paths in $C_i$ such that $f(v)$ is a finite path in $C_i$ starting from $v$. If $v\notin \bigcup_i C_i$, we let $f(v)=\sigma_v$ where $\sigma_v$ starts from $v$ and leads into a BSCC of $G$. The positional strategy $f$ is therefore winning
as each play $\rho$ consistent with $f$ can be seen as a play consistent with $h$.
\end{proof}

The other notions of bounded strategies are not equivalent to any other notion of simple strategy.

\begin{example}[\textbf{Set with a bounded length-counting winning strategy and without a positional winning strategy}]
Let $G_{0,1}$ be the complete graph on $\{0,1\}$, $(\rho_n)$ an
enumeration of finite words in $\{0,1\}$ and
$\rho_{\text{target}}=0\rho_1\rho_2\cdots$. We consider the set
$W=\{\sigma\in\{0,1\}^{\omega}:\#\{i\ge
1:\sigma(i)=\rho_{\text{target}}(i)\}=\infty\}$. It is evident that
Pl.~0 has a bounded length-counting winning strategy for the game
$(G_{0,1},0,W)$. However, Pl.~0 has no positional winning
strategy. Indeed, if $f$ is a positional strategy such that
$f(0)=a(1)\cdots a(k)$, then Pl.~1 can play according to the strategy
$h$ defined by $h(\sigma(1)\cdots
\sigma(n))=\sigma(n+1)\cdots\sigma(N)\ 0$ such that for any $n+1\le
i\le N$, $\sigma(i)\ne \rho_{\text{target}}(i)$,
$\rho_{\text{target}}(N+1)\ne 0$ and for any $1\le i\le k$, $a(i)\ne
\rho_{\text{target}}(N+i+1)$.
\end{example}

\begin{example}[\textbf{Set with a bounded last-move winning strategy and without a positional winning strategy}]
Let $G_{0,1,2}$ be the complete graph on $\{0,1,2\}$. For any $\phi:\{0,1,2\}^*\to \{0,1\}$,
if we consider the set $W:=\{(\pi_i\phi(\pi_i))_{i \ge 1}: \pi_i\in\{0,1,2\}^*\}$, then Pl.~0
has a 1-bounded last-move winning strategy given by $\phi$ for the game $(G_{0,1,2},2,W)$.
On the other hand, we can choose $\phi$ such that Pl.~0 has no positional winning
strategy. Indeed, it suffices to choose $\phi:\{0,1,2\}^*\to \{0,1\}$ such that for any
$\pi\in\{0,1,2\}^*$, any $n\ge 1$, any
$\sigma(1),\dots,\sigma(n)\in\{0,1,2\}$, there exists $k\ge 1$ such
that $\phi(\pi 2^k)\ne \sigma(1)$ and for any $1\le i\le n-1$,
$\phi(\pi 2^k\sigma(1)\cdots\sigma(i))\ne \sigma(i+1)$.  Such a
function exists because the set $\{0,1,2\}^*$ is countable. 
Therefore, Pl.~0 has no positional winning strategy
for the game $(G_{0,1,2},2,W)$ because, if $f$ is a positional strategy and
$f(2)=\sigma(1)\dots \sigma(n)$, then Pl.~1 can play consistent with the strategy
$h$ defined by $h(\pi)=2^k$ such that $\phi(\pi 2^k)\ne \sigma(1)$ and for
any $1\le i\le n-1$, $\phi(\pi 2^k\sigma(1)\cdots \sigma(i))\ne \sigma(i+1)$.
Pl.~0 has thus a 1-bounded last-move winning strategy and no positional winning strategy
for the game $(G_{0,1,2},2,W)$.
\end{example}

\begin{example}[\textbf{Set with a bounded winning strategy and without a bounded length-counting winning strategy}]
Let $G_{0,1,2,3}$ be the complete graph on $\{0,1,2,3\}$.
For any $\phi:\{0,1,2,3\}^*\to\{0,1\}$, if we denote by $W$ the set of runs $\rho$
such that
${\#\{n\ge1:\phi(\rho(1)\dots\rho(n))=\rho(n+1)\}=\infty}$,
then Pl.~0 has a 1-bounded winning strategy given by $\phi$ for the game $(G_{0,1,2,3},2,W)$.
We now show how we can define $\phi$ so that Pl.~0 has no bounded 
length-counting winning strategy.
Let $n_k=\sum_{i=1}^k 3i$. We choose
${\phi:\{0,1,2,3\}^*\to\{0,1\}}$ such that for any $k\ge 1$, any
$\pi\in\{0,1,2,3\}^*$ of length $n_{k}$ and any
$\sigma(1),\dots,\sigma(k)\in\{0,1,2,3\}$, there exists $\tau\in
\{2,3\}^*$ of length $2k$ such that $\phi(\pi\tau \, 2)\ne \sigma(1)$
and for any $1\le i\le k-1$, $\phi(\pi\tau \, 2 \,
\sigma(1)\cdots\sigma(i))\ne\sigma(i+1)$. Such a function exists
because the cardinality of $\{2,3\}^{2k}$ is equal to the cardinality
of $\{0,1,2,3\}^k$ and the length of $\pi\tau \, 2 \,
\sigma(1)\cdots\sigma(k)<n_{k+1}$.  Therefore, Pl.~0 has no
bounded length-counting winning strategy because if $f$ is a $k$-bounded
length-counting strategy (for some $k \in \mathbb{N}$) and
$f(2,n_k+k+1)=\sigma$, then Pl.~1 can start by playing $2^{n_k} \tau \, 2$,
where $\tau\in \{2,3\}^*$ of length $2k$ such that $\phi(\pi\tau \,
2)\ne \sigma(1)$ and for any $1\le i\le k-1$, $\phi(\pi\tau \, 2 \,
\sigma(1)\cdots\sigma(i))\ne\sigma(i+1)$, and if Pl.~1 keep playing
with same philosophy, then Pl.~1 wins the play.
Pl.~0 has thus a 1-bounded winning strategy and no bounded length-counting winning strategy
for the game $(G_{0,1,2},2,W)$.
\end{example}

\medskip

The relations between the different notions of simple strategies on a
finite graph can be summarised as depicted in Figure~\ref{resume2}.
\begin{figure}[h!]
    \begin{center}
      \begin{tikzpicture}[xscale=1,yscale=1.5,>=stealth',shorten >=1pt]
       
        \path (0,6) node[] (q00) {Bounded Positional};
        \path (0,5) node[] (q0) {Positional};
        \path (0,4) node[] (q1f) {Finite memory};

        \path (-7.5,6) node[] (q21) {$1$-Bounded};
        \path (-7.5,5.25) node[] (q22) {$2$-Bounded};
        \path (-7.5,4.5) node[] (qdots) {$\vdots$};
        \path (-7.5,4) node[] (q2b) {$b$-Bounded};
        \path (-7.5,3.25) node[] (qdots) {$\vdots$};
        \path (-7.5,2) node[] (q2) {Bounded};
        \path (-7.5,1) node[] (q2alpha) {$\alpha$-strategy};
        \path (4,1.1) node[] (q2lm) {Last-move};
        \path (0,2.75) node[] (q1) {\begin{tabular}{c}Bounded\\ move-counting\end{tabular}};
        \path (-4.1,2.25) node[] (q1blc) {\begin{tabular}{c}Bounded\\ length-counting\end{tabular}};
        \path (4,2.25) node[] (q1blm) {\begin{tabular}{c}Bounded\\ last-move\end{tabular}};

        \path (0,1.5) node[] (q3) {Move-counting};

        \path (0,0) node[] (q4) {Length-counting};
        \path (0,-1) node[] (q5) {General};


        \draw[double,arrows=->] (q1blm) .. controls +(160:4.55cm) .. (q2);

        \draw[double,arrows=<->] (q1f) -- (q0);
        \draw[double,arrows=<->] (q0) -- (q00);
        \draw[double,arrows=<->] (q1f) -- (q1);
        \draw[double,arrows=->] (q1) -- (q1blc);
        \draw[double,arrows=->] (q1) -- (q1blm);
        \draw[double,arrows=->]  (q1blc) -- (q2);
        \draw[double,arrows=->] (q21) -- (q22);
        \draw[double,arrows=->] (q22) -- (-7.5,4.7);
        \draw[double,arrows=->] (q2b) -- (-7.5,3.45);
        \draw[double,arrows=->] (-7.5,2.75) -- (q2);

        \draw[double,arrows=->] (q1blm) -- (q2lm);
        \draw[double,arrows=->] (q2lm) -- (q5);

        \draw[double,arrows=->] (q1) -- (q3);
        \draw[double,arrows=->] (q2) -- (q2alpha);
        \draw[double,arrows=->] (q2alpha) -- (q5);
        \draw[double,arrows=->] (q3) -- (q4);
        \draw[double,arrows=<->] (q4) -- (q5);

        \draw[double,arrows=->] (q3) -- (q2alpha);

        \path (-8.5,.25) node[right] (qqq) {proba1};
        \path (-8.5,-.25) node[right] (qqq) {notproba1};
        \draw[thick,dashed] (-8.5,0)--(-4,0) -- (5,1.8);
      \end{tikzpicture}
    \end{center}
\caption{Winning strategies for Player~0 on finite graphs.}
\label{resume2}
\end{figure}
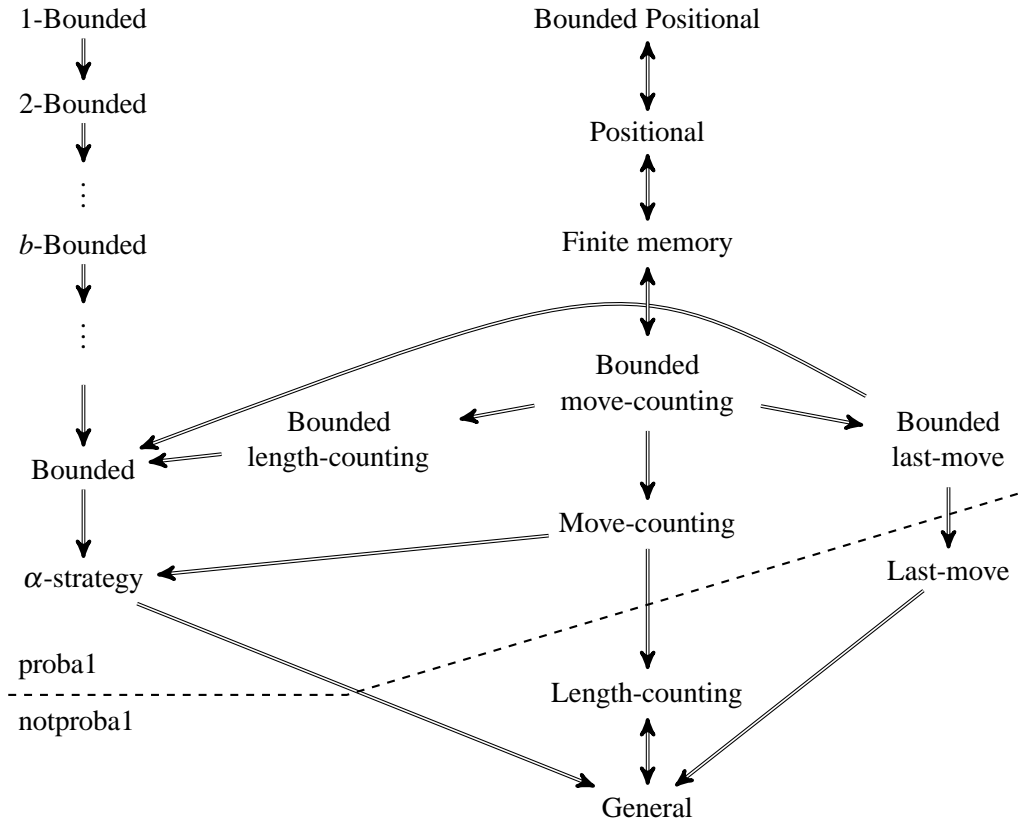
We draw attention to the fact that the situation is very different in
the case of infinite graphs. For example, a positional strategy can be
unbounded, the notion of length-counting winning strategy is not
equivalent to the notion of winning strategy (except if the graph is
finitely branching), and the notion of bounded move-counting winning
strategy for Pl.~0 is not equivalent to the notion of positional
winning strategy.

\begin{example}[\textbf{Set on an infinite graph with a bounded move-counting winning strategy and without a positional winning strategy}]
We consider the complete graph $G_{\mathbb{N}}$ on $\mathbb{N}$ and
the game ${\mathcal{G}=(G_{\mathbb{N}},0,W)}$ where
$W=\{(\sigma_k)\in \mathbb{N}^\omega: \forall~n\ge 1,\ \exists~k\ge
1,\ (\sigma_k,\sigma_{k+1})=(n,n+1)\}$.  Pl.~0 has a bounded
move-counting winning strategy given by $h(v,n)=n~n+1$ but no
positional winning strategy.
\end{example}

\bibliographystyle{eptcs}
\bibliography{biblio}



\end{document}